\title{State Identification for Labeled Transition Systems with Inputs and Outputs\thanks{Funded by the Netherlands Organisation of Scientific Research (NWO) under project 13859: Supersizing Model-Based testing (SUMBAT).}}
\author{Petra van den Bos  \and Frits Vaandrager}
\institute{Institute for Computing and Information Sciences, \\ Radboud University, Nijmegen, the~Netherlands\\\texttt{$\{$petra, f.vaandrager$\}$@cs.ru.nl}}
\newif\iflong
\newif\ifhideproofs
\newcommand{\algorithmref}[1]{Algorithm~\ref{#1}}
\newcommand{\figureref}[1]{Figure~\ref{#1}}
\newcommand{\exampleref}[1]{Example~\ref{#1}}
\newcommand{\definitionref}[1]{Definition~\ref{#1}}
\newcommand{\lemmaref}[1]{Lemma~\ref{#1}}
\newcommand{\corollaryref}[1]{Corollary~\ref{#1}}
\newcommand{\tableref}[1]{Table~\ref{#1}}
\newcommand{\after}[2]{#1\text{ }\textit{after}\text{ }#2}
\newcommand{\before}[2]{#1\text{ }\textit{before}\text{ }#2}
\newcommand{\out}[1]{\textit{out}(#1)}
\newcommand{\outm}[1]{\textit{out}(#1)}
\newcommand{\outsub}[2]{\textit{out}_{#1}(#2)}
\newcommand{\inp}[1]{\textit{in}(#1)}
\newcommand{\inpm}[1]{\textit{in}(#1)}
\newcommand{\traces}[1]{\textit{traces}(#1)}
\newcommand{\pre}{\mathit{Pre}}
\newcommand{\post}{\mathit{Post}}
\newcommand{\enabled}[2]{\textit{enabled}(#1,#2)}
\newcommand{\Obs}[1]{\textit{Obs}(#1)}
\newcommand{\Nil}{\mathbf{0}}
\newcommand{\Ev}{E_\mathit{CCS}}
\newcommand{\Tv}{T_\mathit{CCS}}
\newcommand{\smallerfont}{\fontsize{7pt}{0}\selectfont}
\newcommand{\compatible}{\mathrel{\Diamond}}
\newcommand{\leaves}[1]{\textit{leaves}(#1)}
\newcommand{\nonleaves}[1]{\textit{internal}(#1)}
\newcommand{\bigO}{\mathcal{O}}
\let\oldnl\nl
\newcommand{\nonl}{\renewcommand{\nl}{\let\nl\oldnl}}
\newcommand*{\descriptiontext}[1]{(#1)}
\setlist[description]{font=\textbullet\normalfont\space\descriptiontext} 
\begin{document}
 \maketitle
\begin{abstract}
For Finite State Machines (FSMs) a rich testing theory has been developed to discover
aspects of their behavior and ensure their correct functioning.
Although this theory is widely used, e.g., to check conformance of protocol implementations,
its applicability is limited by restrictions of the FSM framework: the fact that inputs
and outputs alternate in an FSM, and outputs are fully determined by the previous input and state.
Labeled Transition Systems with inputs and outputs (LTSs), as studied in ioco testing theory, provide a richer
framework for testing component oriented systems, but lack the algorithms for test generation from FSM theory.

In this article, we 
propose an algorithm for the fundamental problem of \emph{state identification} during testing of LTSs. Our algorithm is a direct generalization of the well-known algorithm for computing adaptive distinguishing sequences for FSMs proposed by Lee \& Yannakakis.
Our algorithm has to deal with so-called \emph{compatible} states, states that cannot be distinguished in case of
an adversarial system-under-test.
Analogous to the result of Lee \& Yannakakis, we prove that if an (adaptive) test exists that distinguishes all pairs of incompatible states of an LTS, our algorithm will find one.
In practice, such adaptive tests typically do not exist. However, in experiments with an implementation of our algorithm on an
industrial benchmark, we find that tests produced by our algorithm still distinguish more than 99\% of the incompatible state pairs.
\end{abstract}

\section{Introduction}

Starting with Moore's famous 1956 paper \cite{moore1956}, a rich theory of testing finite-state machines (FSMs) has been developed to discover
aspects of their behavior and ensure their correct functioning; see e.g.\ \cite{LeeY96} for a survey.
One of the classical testing problems is \emph{state identification}: given some FSM, determine in which state it was initialized, by providing inputs and observing outputs.

Various forms of \emph{distinguishing sequences} were proposed, ranging from sets of sequences to single sequences solving the problem. 
Moreover, when combined with state access sequences, so called $n$-complete test suites can be constructed \cite{fsmtestingsurvey}.
%
The challenge in using $n$-complete test suites is to keep their size as small as possible. 
Using a single (adaptive) sequence for state identification \cite{leeyannakakis}, helps to reach this objective.
If such a single sequence does not exist, then a distinguishing sequence distinguishing most states may be supplemented with some additional distinguishing sequences that distinguish the remaining states \cite{thesisjoshua}.

Although state identification algorithms for FSMs have been widely used, e.g., to check conformance of protocol implementations,
their applicability is limited by the expressivity of the FSM framework.
In FSMs, inputs and outputs strictly alternate, outputs are fully determined by the previous input and state, and inputs must be enabled in every state.
Labeled Transition Systems with inputs and outputs (LTSs), 
as studied in ioco testing theory \cite{tretmans}, provide a richer
framework for testing component oriented systems:
transitions are labeled by either an input or an output, allowing any combination of inputs and outputs, multiple outputs may be starting from the same state, allowing (observable) output nondeterminism, and states do not need to have transitions for all inputs, allowing partiality.
However, LTSs lack the algorithms for test generation from FSM theory.
Although progress has been made in defining and constructing $n$-complete test suites for LTSs \cite{ncompletejournal}, an algorithm to solve the state identification problem as in \cite{leeyannakakis}, and hence to provide slim $n$-complete test suites, is missing.

Therefore we generalize the construction algorithms for adaptive distinguishing sequences, as given in \cite{leeyannakakis}.
As in \cite{ncompletejournal}, we have to face the problem of compatible states \cite{detimpnondetspec,iotsioco}, which does not occur for FSMs.
States are \emph{compatible} when they cannot be distinguished in case of
an adversarial system-under-test, e.g. when two states have a transition for the same output to the same state.
As it is easy to construct LTSs with compatible states, we made sure our algorithms can deal with such LTSs: they accept LTSs with compatible states, but they `work around' them, dealing with all incompatible states.

The outline of the paper is as follows. We first introduce graphs, LTSs, and some syntax for denoting trees.
Then we elaborate on compatibility and the related concept of validity.
Furthermore, we introduce test cases, and define when they distinguish states of an LTS.
After that we define a data structure called \emph{splitting graph}, present an algorithm that constructs a splitting graph for a given LTS, and another algorithm that extracts a test case from a splitting graph.
We show that, unlike for FSMs, the splitting graph may have an exponential number of nodes. However, this is worst case behaviour, as our experiments on an industrial case study will show.
Analogous to FSMs, it may not be possible to distinguish all states of an LTS with a single test case. 
Our experiments show that this is typically the case in practice, but nevertheless more than 99\% of the incompatible state pairs are distinguished by the constructed test case. 
Following \cite{leeyannakakis}, we show that our algorithms constructs a test case distinguishing all incompatible state pairs, if it exists.

\paragraph{Related work}
There are (at least) three ortogonal ways in which the classical FSM (or Mealy machine) model can be generalized.

A first generalization is to add nondeterminism. Whereas an FSM has exactly one outgoing transition for each state $q$
and input $i$, a \emph{nonderministic FSM} allows for more than one transition.
Alur, Courcoubetis \& Yannakakis \cite{distnondetprob} propose an algorithm to generate adaptive distinguishing sequences for nondeterministic FSMs, using (overlapping) subsets  of states, similar to our algorithm. However, their sequences only distinguish pairs of states, and are not designed to distinguish more states at the same time.
In between FSMs and nondeterministic FSMs we find the \emph{observable} FSMs, which have at most one outgoing transition  for each state $q$, input $i$ and output $o$; one may use a determinization construction to convert any nondeterministic FSM into an
observable one. The LTSs that we consider have observable nondeterminism.

A second generalization of FSMs is to relax the requirement that each input is enabled in each state.  In a \emph{partial FSM}, states
do not necessarily have outgoing transitions for every state and every input.
Petrenko \& Yevtushenko \cite{nfsmreduction} derive complete test suites for partial, observable FSMs, 
which is the closest to the automata model that we study in this paper.
Their test generation is based on (adaptive) state counting \cite{statecounting}, which is a trace search-based method which recognizes when states are distinguished, but does not provide a constructive way to build a test that distinguishes (many) states at once.
Yannakakis \& Lee \cite{LYfaultdetection} present a randomized algorithm which generates, with high probability, checking sequences, i.e., $n$-complete test suites consisting of a single sequence. This approach is also applicable to partial FSMs, as opposed to the adaptive distinguishing sequence construction algorithms of \cite{leeyannakakis}, which apply to plain FSMs.

A third generalization of FSMs is to relax the requirement that inputs and outputs alternate. In our LTS, inputs and outputs
may occur in arbitrary order. 
Bensalem, Krichen \& Tripakis \cite{detstateidfc} give an algorithm for extracting adaptive distinguishing sequences for all states of a given LTS, by translating back and forth between a corresponding Mealy machine. This translation is only possible, if all states of the LTS have at most one outgoing output transition. 
Van den Bos, Janssen \& Moerman \cite{ncompletejournal} do not need such a restriction. 
They propose an algorithm that generates an adaptive distinguishing sequence for all pairs of incompatible states. In this paper, we generalize the result of \cite{ncompletejournal} to distinguish more states at the same time.

\iflong
\noindent
\else
Due to page limits, all proofs have been omitted. They can be found in the full version \cite{abs-1907-11034}.
\fi

\section{Preliminaries}
\label{sec:prelim}
We write $f : X \rightharpoonup Y$ to denote that $f$ is a partial function from $X$ to
$Y$.  We write $f(x) \downarrow$ to mean $\exists y : f(x)=y$, i.e., the result
is defined, and $f(x) \uparrow$ if the result is undefined.
We often identify a partial function $f$ with the set of pairs
$\{ (x,y) \in X \times Y \mid f(x)=y \}$.

If $\Sigma$ is a set of symbols then $\Sigma^{\ast}$ denotes the set of all
finite words over $\Sigma$. The empty word is denoted by $\epsilon$, the word 
consisting of symbol $a \in \Sigma$ is denoted $a$, and concatenation of words is denoted
by juxtaposition.

Throughout this article, we use standard notations and terminology related to finite directed graphs (digraphs) and finite directed acyclic graphs (DAGs), as for instance defined in \cite{CLRS3rd,BK08}.
If $G = (V, E)$ is a digraph and $v \in V$, then we let $\post_G(v)$, or
briefly $\post(v)$, denote the set of direct successors of $v$, that is,
$\post(v) = \{ w \in V \mid (v,w) \in E \}$. 
Similarly, $\pre_G(v)$, or briefly $\pre(v)$, denotes the set of direct predecessors
of $v$, that is, $\pre(v) = \{ w \in V \mid (w,v) \in E \}$.
Vertex $v$ is called a \emph{root} if $\pre(v) = \emptyset$, a \emph{leaf} if
$\post(v) = \emptyset$, and \emph{internal} if $\post(v) \neq \emptyset$.
We write $\leaves{G} = \{v \in V \mid \post(v) = \emptyset\}$, and $\nonleaves{G} = V \setminus \leaves{G}$.

The automata considered this paper are deterministic, finite labeled transition systems with transitions that are labeled by inputs or outputs.
Since a single state may have outgoing transitions labeled with different outputs, and since outputs are not controllable, the behavior of our automata is nondeterministic: in general, for a given sequence of inputs, the resulting sequence of outputs is not uniquely determined.  Nevertheless, our automata are deterministic in the sense of classical automata theory: for any observed sequence of inputs and outputs the resulting state is uniquely determined.  We say that our automata have \emph{observable nondeterminism}.

Because the inputs and outputs will be fixed throughout this article, we fix $I$ and $O$ as nonempty, disjoint, finite sets of input and output labels, respectively, and write $L = I \cup  O$.
We will use $a, b$ to denote input labels, $x, y, z$ to denote output labels, and
$\mu$ for labels that are either inputs or outputs.

\begin{definition} \label{def:automata}
 An \emph{automaton (with inputs and outputs)} is a triple $A = (Q,T,q_0)$ with 
 $Q$ a finite set of \emph{states}, $T : Q \times L \rightharpoonup Q$
 a \emph{transition function}, and $q_0 \in Q$ the \emph{initial state}.
 We associate a digraph to $A$ as follows
 \begin{eqnarray*}
 \mathit{digraph}(A) & = & (Q, \{ (q, q') \mid \exists \mu \in L : T(q,\mu) = q'\}).
 \end{eqnarray*}
Concepts and notations for $\mathit{digraph}(A)$ extend to $A$. Thus we say,
for instance, that automaton $A$ is acyclic when $\mathit{digraph}(A)$ is acyclic, and we write $\post(q)$ for the set of direct successors of a state $q$.
For $A = (Q,T,q_0)$ and $q \in Q$ we write $A / q$ for $(Q,T,q)$, that is,
the automaton obtained from $A$ by replacing the initial state by $q$.
\end{definition}

 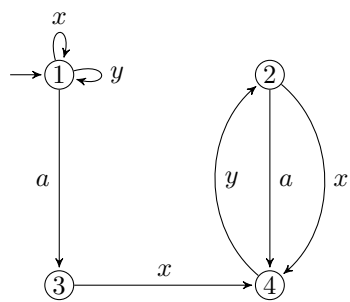
\begin{wrapfigure}{r}{0.4\textwidth}
	\centering
	\begin{tikzpicture}[shorten >=1pt,node distance=2.8cm,>=stealth']
	\tikzstyle{every state}=[draw=black,text=black,inner sep=1pt,minimum
	size=10pt,initial text=]
	\node[state,initial,initial where=left] (1) {1};
	\node[state] (2) [right of=1] {2};
	\node[state] (3) [below of=1] {3};
	\node[state] (4) [below of=2] {4};
	\path[->]
	(1) edge [loop above] node {$x$} (1)
	(1) edge [loop right] node {$y$} (1)
	(1) edge node [left] {$a$} (3)
	(2) edge node [right] {$a$} (4)
	(2) edge [bend left=50] node [right] {$x$} (4)
	(3) edge node [above] {$x$} (4)
	(4) edge [bend left=50] node [right] {$y$} (2)
	;
	\end{tikzpicture}
	\caption{Running example}
	\label{fig:spec}
			\vspace{2mm}
\end{wrapfigure}

\figureref{fig:spec} shows an example automaton.
Below, we recall the definitions of some basic operations on (sets of) automata states.
Operations $\mathit{in}$, $\mathit{out}$ and $\mathit{init}$ retrieve all the inputs, outputs, or labels enabled in a state, respectively.
To every set of states $P$ and every sequence of labels $\sigma$ we can associate three sets of states:
$\after{P}{\sigma}$, $\before{P}{\sigma}$, and $\enabled{P}{\sigma}$. 
The set $\after{P}{\sigma}$ comprises all states that can be reached starting from a state of $P$ via a path with trace $\sigma$,
whereas the set $\before{P}{\sigma}$ consists of all the states from where it is possible to reach a state in $P$ via a trace in $\sigma$,
and $\enabled{P}{\sigma}$ consists of all states in $P$ from where a path with trace $\sigma$ is possible.
The \textit{traces} operation provides the sequences of labels that can be observed from one or more of the states. 
We use a subscript if confusion may arise due to the use of several automata in the same context, e.g. $\outsub{A}{q}$ denotes the enabled outputs of $q$ in automaton $A$.

 \begin{definition} \label{def:operations}
 Let $A = (Q,T,q_0)$ be an automaton, 
 $q \in Q$,  $\mu \in L$ and $\sigma \in L^*$. Then we define: 

 \vspace{-1em}
 \noindent
 \begin{minipage}[t]{0.45\textwidth}
\small
 \begin{align*}
 & \inpm{q} = \{a \in I \mid T(q,a) \downarrow\}\\
 & \outm{q} = \{x \in O \mid T(q,x) \downarrow\}\\
  &\after{q}{\epsilon} = \{q\}\\
  &\after{q}{\mu\sigma} =
  \begin{cases}
     \after{T(q,\mu)}{\sigma} & \text{if } T(q,\mu) \downarrow\\
     \emptyset & \text{otherwise}
   \end{cases}\\
 \end{align*}
\end{minipage}
\quad
 \begin{minipage}[t]{0.53\textwidth}
\small
 \begin{align*}
 & \enabled{q}{\sigma} = \begin{cases}
 \emptyset & \text{if } \after{q}{\sigma} = \emptyset\\
 \{ q \} & \text{otherwise}
 \end{cases}\\
  &\before{q}{\sigma} = \{ q' \in Q \mid q \in \after{q'}{\sigma} \}\\
 &\after{A}{\sigma} = \after{q_0}{\sigma}\\   
 & \traces{q} = \{\rho \in L^* \mid \after{q}{\rho} \neq \emptyset\}
 \end{align*}
\end{minipage}

\noindent 
Definitions are lifted to sets of states by pointwise extension. Thus, for
$P \subseteq Q$,
$\inpm{P} = \bigcup_{p \in P}\inpm{p}$,
$\after{P}{\sigma} = \bigcup_{p \in P} \after{p}{\sigma}$, etc.
We sometimes write the automaton, instead of the singleton set containing the initial state.
\end{definition}

We find it convenient to use a fragment of Milner's Calculus of Communicating Systems \cite{milner} as syntax for denoting acyclic automata. In particular, its recursive definition will allow us to incrementally construct test cases in Sections~\ref{sec:algorithms} and~\ref{sec:distgraph}.

\begin{definition} \label{def:CCS}
The set of expressions $\Ev$  is defined by the BNF grammar 
\begin{eqnarray*}
F & := & \Nil \mid F + F \mid \mu.F
\end{eqnarray*}
The set $\Tv \subseteq \Ev \times L \times \Ev$ is the smallest set of triples such that,
for all $\mu \in L$ and $F, F', G \in \Ev$,
\begin{enumerate}
\item 
$(\mu . F, \mu, F) \in \Tv$
\item
If $(F, \mu, G) \in \Tv$ then $(F+F', \mu, G) \in \Tv$
\item
If $(F, \mu, G) \in \Tv$ then $(F'+F, \mu, G) \in \Tv$
\end{enumerate}
An expression $F \in \Ev$ is \emph{deterministic} iff, for all subexpressions
$G$ of $F$,
\begin{eqnarray*}
(G,\mu, G') \in \Tv \wedge (G,\mu, G'') \in \Tv & \Rightarrow & G'=G''
\end{eqnarray*}
To each deterministic expression $F \in \Ev$ we associate an automaton $A_F = (Q, T, F)$, where $Q$ is the set of subexpressions of $F$, and
transition function $T$ is defined by
\begin{eqnarray*}
T(G,\mu) & = & \left\{ \begin{array}{ll}
G' & \mbox{if } (G,\mu,G') \in \Tv\\
\mbox{undefined} & \mbox{otherwise}
\end{array}\right.
\end{eqnarray*}
\end{definition}

\begin{example} \label{exmp:CCS}
 The CCS expression $a.(x.\Nil + y.\Nil)$ has subexpressions $a.(x.\Nil + y.\Nil)$, $x.\Nil + y.\Nil$, $x.\Nil$, $y.\Nil$, and $\Nil$. These are the states of its associated automaton. The automaton's transition relation is: $\{ (a.(x.\Nil + y.\Nil), a , x.\Nil + y.\Nil), (x.\Nil + y.\Nil, x, \Nil), (x.\Nil + y.\Nil, y,\Nil) (x.\Nil, x, \Nil), (y.\Nil, y,\Nil) \}$. Note that states $x.\Nil$ and $y.\Nil$ are not reachable from initial state $a.(x.\Nil + y.\Nil)$.
\end{example}

Suspension automata are automata with the additional property that in each state at least one output label is enabled. 
We note that this requirement can be easily enforced by adding a self-loop for an additional output label, that denotes `no-output' or \emph{quiescence} \cite{tretmans}, in each state that has no output transition.
We note that our definition of suspension automata, which is taken from \cite{ncompletejournal}, is more general than the one from \cite{tretmans,willemse}, since we only require states to be non-blocking, while suspension automata from \cite{tretmans,willemse} adhere to some additional properties associated to this special quiescence output. 

\begin{definition} \label{def:sa}
  Let $A = (Q,T,q_0)$ be an automaton.
   We call a state $q \in Q$ \emph{blocking} if  $\out{q} = \emptyset$, and call
   $A$ \emph{non-blocking} if none of its states is blocking.
  A non-blocking automaton is also called a \emph{suspension automaton}.
\end{definition}

We will use suspension automata as the specifications to derive test cases from. \figureref{fig:spec} shows a suspension automaton.
Plain automata are sometimes used as an intermediate structure to do computations, and test cases will be acyclic automata adhering to some additional properties.

\section{Validity and Compatibility} \label{sec:valcomp}
In this section, we recall the definitions of the related notions of validity and
compatibility \cite{ncompletejournal}.
We first give an efficient algorithm for computing valid states.
After that, we show how the relation between validity and compatibility can be used to efficiently compute all pairs of compatible states occurring in a suspension automaton. We will need this last relation when constructing test cases to distinguish incompatible states.

\subsection{Validity}
\label{sec:validity}
We consider the following 2-player concurrent game, which is a minor variant of
reachability games studied e.g., in \cite{Mazala2002,mbtgames}.
Two players, the tester and the System Under Test (SUT), play on a state space consisting of an
automaton $A = (Q,T,q_0)$.
At any point during the game there is a \emph{current state}, which is $q_0$ initially.
To advance the game, both the tester and the SUT choose an action from the current state $q$:
\begin{itemize}
	\item 
	The tester chooses either an input from $\inpm{q}$, or the special action $\theta \not\in L$. By choosing $\theta$, the tester indicates that she performs no
	input and allows the SUT to execute any output he wishes.
	\item 
	The SUT chooses an output from $\outm{q}$, or $\theta$ if no output is possible.
\end{itemize}
The game moves to a next state according to the following rule
(this is the input-eager assumption from \cite{mbtgames}):
If the tester chooses an enabled input $a$ this will be executed, i.e.,
the current state changes to $T(q,a)$;
if the SUT chooses an enabled output $x$ this will only be executed when the tester
has chosen $\theta$, in this case the current state changes to $T(q,x)$; when both players choose $\theta$, the game terminates.
The tester wins the game if she reaches a blocking state, and the SUT wins if he has a strategy that ensures that the tester will never win.
A (memoryless) strategy for the tester is a function ${\mathit move}: Q \rightarrow I \cup \{ \theta \}$. We say a strategy is \emph{winning} if the tester will
always win the game (within a finite number of moves) when selecting actions according to this strategy, no matter
which actions the SUT takes.
Following Bene{\v{s}} et al \cite{mergesas} and Van den Bos et al \cite{ncompletejournal}, we call states for which the tester
has a winning strategy \emph{invalid}, and the remaining states in $Q$ \emph{valid}.
The sets of valid and invalid states are characterized by the following
lemma (cf Proposition 2.18 of \cite{Mazala2002}):
\begin{lemma}
	\label{lemma game}
Let  $A = (Q,T,q_0)$ be an automaton.
\begin{enumerate}
\item 
The set of invalid states of $A$ is the smallest set $P \subseteq Q$ such that $q \in P$ if
\begin{eqnarray*}
	\exists a \in \inp{q}: T(q,a) \in P & \mbox{ or} &
	\forall x \in \outm{q} : T(q,x) \in P.
\end{eqnarray*}
\item 
The set of valid states of $A$
is the largest set $P \subseteq Q$ such that $q \in P$ implies
\begin{eqnarray*}
\forall a \in \inp{q}: T(q,a) \in P & \mbox{ and} &
\exists x \in \outm{q} : T(q,x) \in P.
\end{eqnarray*}
\end{enumerate}
\end{lemma}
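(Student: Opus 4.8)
The plan is to prove the two items by standard fixed-point / game-theoretic arguments. Item 1 characterizes invalid states as a least fixed point, and Item 2 is essentially its dual. I would first observe that Item 2 follows from Item 1 once we note that "valid" is by definition the complement of "invalid": if $P$ is the smallest set satisfying the closure condition of Item 1, then $Q \setminus P$ is the largest set such that $q \in Q \setminus P$ implies $q \notin P$, and negating the defining implication of Item 1 (using that $\inp{q}$ is finite and $\outm{q}$ may be empty) turns "$\exists a: T(q,a) \in P$ or $\forall x: T(q,x) \in P$" into "$\forall a: T(q,a) \notin P$ and $\exists x: T(q,x) \notin P$". Note that when $\outm{q} = \emptyset$ the clause $\forall x \in \outm{q}: T(q,x) \in P$ holds vacuously, so every blocking state lies in $P$; dually no blocking state is valid, which matches the intuition that blocking states are winning for the tester. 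So the real content is Item 1.

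For Item 1, let $W$ denote the set of states from which the tester has a memoryless winning strategy (the "invalid" states), and let $P^\ast$ denote the smallest set closed under the stated rule; this least fixed point exists because the operator $F(P) = \{ q \mid \exists a \in \inp{q}: T(q,a) \in P \text{ or } \forall x \in \outm{q}: T(q,x) \in P \}$ is monotone on the finite powerset lattice $2^Q$. I would show $W = P^\ast$ by two inclusions. For $P^\ast \subseteq W$: I would argue that $W$ itself is closed under $F$, i.e.\ $F(W) \subseteq W$, hence $P^\ast \subseteq W$ by minimality. Indeed, if $q$ has an input $a$ with $T(q,a) \in W$, the tester plays $a$ and moves to a winning state; if instead every output leads to $W$ (in particular if $q$ is blocking, where the tester already won), the tester plays $\theta$ and whatever output the SUT picks lands in $W$ — in both cases $q \in W$. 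For $W \subseteq P^\ast$: I would show that from any state $q \notin P^\ast$ the SUT has a strategy avoiding a win forever, so $q \notin W$. Since $P^\ast$ is a fixed point, $q \notin P^\ast$ means $F(\{P^\ast\})$ fails at $q$: every input $a \in \inp{q}$ has $T(q,a) \notin P^\ast$, and there exists an output $x \in \outm{q}$ with $T(q,x) \notin P^\ast$ (in particular $q$ is not blocking, so the tester has not won). The SUT's strategy is: whenever the tester plays $\theta$, respond with such an output $x$ staying outside $P^\ast$. Under this strategy, every move (whether a tester input or the SUT's chosen output) keeps the current state in $Q \setminus P^\ast$, the game never terminates by double-$\theta$ since $q$ is non-blocking at every visited state, and the tester never reaches a blocking state. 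Hence the tester has no winning strategy from $q$.

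The step I expect to require the most care is the $W \subseteq P^\ast$ direction, specifically justifying that the SUT's described strategy genuinely prevents a tester win "within a finite number of moves" — one must be explicit that the invariant "current state $\notin P^\ast$" is preserved by all of: a tester input (guaranteed since $P^\ast$ is a fixed point, so no input from outside $P^\ast$ can enter $P^\ast$), the SUT's output response (chosen precisely to stay outside $P^\ast$), and that no blocking state (which would lie in $P^\ast$) is ever reached, so the tester's winning condition is never met regardless of how many moves pass. A subtlety worth flagging is that memoryless strategies suffice on both sides; for the tester this is because $P^\ast$ is reached by a finite iteration $\emptyset \subseteq F(\emptyset) \subseteq F^2(\emptyset) \subseteq \cdots$ stabilizing after at most $|Q|$ steps, and the rank (least $k$ with $q \in F^k(\emptyset)$) strictly decreases along the tester's prescribed move, guaranteeing a win in at most $|Q|$ steps; for the SUT memorylessness is immediate from the construction. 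I would cite Proposition 2.18 of \cite{Mazala2002} for the general reachability-game fact and present the above as its instantiation to this particular game, so the write-up can be kept short.
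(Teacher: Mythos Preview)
Your proposal is correct and follows the standard attractor/trap argument for 2-player reachability games. Note that the paper does not actually give its own proof of this lemma: it simply states the result with the parenthetical ``(cf Proposition 2.18 of \cite{Mazala2002})'' and moves on, so there is nothing to compare against beyond the fact that you, too, point to that same reference and sketch its instantiation to the present game.
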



Based on \lemmaref{lemma game}(1), \algorithmref{blockingalg} computes the set of invalid states of an automaton $A$ and, for each invalid state $q$, the first move $\mathit{move}(q)$ of a winning strategy for the tester, as well as the maximum number $\mathit{level}(q)$ of moves required to win the game.
\begin{algorithm}[!ht]
	\caption{Computing the invalid states.}
	\label{blockingalg}
	\SetKwInOut{Input}{Input}
	\Input{An automaton $A = (Q, T, q_0)$.}
	\SetKwInOut{Output}{Output}
	\Output{The subset $P \subseteq Q$ of invalid states and, for each state $q \in P$, the first move $\mathit{move}(q)$ from a winning stragegy for the tester and the maximum number $\mathit{level}(q)$ of moves required to win.}
	\SetKwFunction{func}{ComputeWinningTester}
	\SetKwProg{myalg}{Function}{:}{}
	\myalg{\func($Q, T, q_0$)}{
		$W := \emptyset$ \tcp*{winning states for tester that need processing}
		\ForEach{$q \in Q$}{		
		$\mathit{count}(q) := ~ \mid \outm{q} \mid$\;
		$\mathit{incomingtransitions}(q) :=$ set of incoming transitions of $q$\;
		\If{$\mathit{count}(q)=0$}{
			$W := W \cup \{ q \}$  \tcp*{state $q$ is invalid}
			$\mathit{move}(q) := \theta$\;
			$\mathit{level}(q) := 0$
		}
	}
		$P := \emptyset$ \tcp*{winning states for tester that have been processed}
       \While{$W \neq \emptyset$}
       {
       	$p := \mbox{any element from } W$\;
       	\ForEach{$(q, \mu, p) \in \mathit{incomingtransitions}(p)$}
       	{
       	\If{$q \not\in P \cup W$}{
       		\eIf{$\mu \in I$}{	
       		$W := W \cup \{ q \}$ \tcp*{state $q$ has input to winning state}
       		$\mathit{move}(q) := \mu$\;	
       		$\mathit{level}(q) := \mathit{level}(p)+1$
       	}
       	{
       			$\mathit{count}(q) :=  \mathit{count}(q)-1$\;
       			\If{$\mathit{count}(q)=0$}{
       				$W := W \cup \{ q \}$ \tcp*{all outputs $q$ to winning states}
       				$\mathit{move}(q) := \theta$\;
       				$\mathit{level}(q) := 1+ \max_{x \in\outm{q}}  \mathit{level}(T(q,x))$
       			}
       		}
       	}
       	}
       $W := W \setminus \{ p \}$\; 
       $P := P \cup \{ p \}$
       }	
		\Return{\emph{set} $P$, \emph{function} $\mathit{move}$, \emph{and function} $\mathit{level}$}\;
	}
\end{algorithm}
Algorithm~\ref{blockingalg} is a minor
variation of the classical algorithm for computing attractor sets and traps in 2-player concurrent games \cite{Mazala2002}
and the procedure described by Bene{\v{s}} et al \cite{mergesas},
which takes as input an automaton, of which each state $q$ has $\inp{q} = L_I$, and prunes away 
invalid states.
Key invariants of the while-loop of lines 13-33 are that
states in $W \cup P$ are invalid, and
for  $q \in Q \setminus (P \cup W)$, $\mathit{count}(q)$ gives
the number of output transitions to states in $Q \setminus P$.

Let $n$ be the number of states in $Q$, and $m$ the number of transitions in $T$. We assume, for convenience, that $m \geq n$.
If we use an adjacency-list representation of $A$ and represent the
set of incoming transitions using a linked list, 
then the time complexity of the initialization part (lines 2-11) is $\bigO(m)$.
The while-loop (lines 13-33) visits each transition of $A$ at most twice (in lines 15 and 26) and
performs a constant amount of work.
Thus the time complexity of the while loop is $\bigO(m)$.
This means that the time complexity of Algorithm~\ref{blockingalg} is also $\bigO(m)$.

\iflong
The next lemma states some basic properties of the $\mathit{level}$ function that records the maximum number of moves required to win.

\begin{lemma}
\label{lemma move and level}
Let $A = (Q, T, q_0)$ be an automaton and $P \subseteq Q$ the set of invalid states of $A$. Let $\mathit{move}$ and $\mathit{level}$
be as computed by Algorithm~\ref{blockingalg}. Then, for all $q \in P$ and $a \in I$,
\begin{enumerate}
\item
$\mathit{level}(q) = 0 ~ \Leftrightarrow ~ q$ is blocking, 
\item
$\mathit{level}(q) > 0 \wedge \mathit{move}(q)=a  ~ \Rightarrow ~ T(q,a) \in P \wedge
\mathit{level}(T(q,a)) < \mathit{level}(q)$,
\item
$\mathit{level}(q) > 0 \wedge \mathit{move}(q) = \theta ~ \Rightarrow ~ \forall x \in \outm{q}: \mathit{level}(T(q,x)) < \mathit{level}(q)$.
\end{enumerate}
\end{lemma}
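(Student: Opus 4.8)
The plan is to verify each of the three properties directly against the way Algorithm~\ref{blockingalg} assigns the values of $\mathit{move}$ and $\mathit{level}$, using the key invariants of the while-loop already stated in the text. The basic observation is that a state $q$ is placed into $W$ (and eventually into $P$) at exactly one point in the execution, and at that moment both $\mathit{move}(q)$ and $\mathit{level}(q)$ receive their final values; so it suffices to do a case analysis on which of the three assignment sites (lines 7--8, lines 19--20, or lines 29--30) was responsible for $q$. I would phrase the argument as an induction on the order in which states enter $W$, or equivalently on $\mathit{level}(q)$, since the level of a state is always set to be strictly larger than the levels of the already-processed states that caused it to be added.

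For part~(1): if $q$ is blocking then $\mathit{count}(q) = |\outm{q}| = 0$ already after the initialization loop, so $q$ is added to $W$ at lines 6--8 with $\mathit{level}(q) := 0$. Conversely, the only place $\mathit{level}$ is set to $0$ is line 8, which is guarded by $\mathit{count}(q) = 0$ at initialization, i.e.\ $\outm{q} = \emptyset$; the two other assignment sites set $\mathit{level}$ to $\mathit{level}(p)+1 \geq 1$ and to $1 + \max_{x} \mathit{level}(T(q,x)) \geq 1$ respectively. Hence $\mathit{level}(q) = 0 \Leftrightarrow q$ blocking.

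For part~(2): if $\mathit{level}(q) > 0$ and $\mathit{move}(q) = a \in I$, then $q$ must have acquired these values at lines 18--20, inside the branch $\mu \in I$. That branch is reached only while processing some $p$ with $(q, a, p) \in \mathit{incomingtransitions}(p)$ and $p \in W$ at that time; by the loop invariant $p$ is invalid, i.e.\ $p \in P$ in the final set, and $p = T(q,a)$ since $(q,a,p)$ is a transition and $T$ is a (partial) function. The assignment $\mathit{level}(q) := \mathit{level}(p) + 1$ gives $\mathit{level}(T(q,a)) = \mathit{level}(q) - 1 < \mathit{level}(q)$. Here I should note that $p$ had already been fully processed (or at least had its level fixed) before $q$ was added, which is what makes $\mathit{level}(p)$ well-defined at that point; this follows because a state's level is frozen the moment it enters $W$.

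For part~(3): if $\mathit{level}(q) > 0$ and $\mathit{move}(q) = \theta$, then since $q$ is not blocking (part~(1)), the values were assigned at lines 28--30, in the branch where $\mathit{count}(q)$ has just been decremented to $0$. By the second loop invariant, $\mathit{count}(q)$ counts output transitions from $q$ to states not yet in $P$; $\mathit{count}(q) = 0$ therefore means every $x \in \outm{q}$ has $T(q,x) \in P$, and moreover each such $T(q,x)$ was already processed and thus has a defined, smaller level — the assignment $\mathit{level}(q) := 1 + \max_{x \in \outm{q}} \mathit{level}(T(q,x))$ makes $\mathit{level}(T(q,x)) < \mathit{level}(q)$ for all $x$ immediate. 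The main obstacle, such as it is, is being careful about the invariant that a state's $\mathit{level}$ is assigned exactly once and that all states causing $q$'s insertion have strictly smaller (already-fixed) levels; once that bookkeeping fact about the algorithm is nailed down, all three parts are short case analyses. I would therefore state and briefly justify that "level is set once, and monotonically with processing order" as a preliminary claim, and then dispatch (1)--(3) as above.
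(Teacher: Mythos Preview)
The paper does not actually supply a proof of Lemma~\ref{lemma move and level}; the lemma is stated (inside the \texttt{\textbackslash iflong} block) and then immediately used, without a proof environment. So there is no proof in the paper to compare against.

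Your proposal is correct and is the natural argument. The three parts are indeed just a case analysis on the unique assignment site of $\mathit{move}(q)$ and $\mathit{level}(q)$ (lines 7--8, 19--20, or 29--30), and the stated loop invariants in the text (that states in $W\cup P$ are invalid, and that $\mathit{count}(q)$ counts output transitions to $Q\setminus P$) are exactly what you need. One small point worth making explicit in part~(3): when $\mathit{count}(q)$ hits $0$, the state $p$ currently being processed is still in $W$, not yet in $P$, so the invariant as literally stated does not quite give that every $T(q,x)\in P$ at that instant; what matters, and what you correctly identify, is that every $T(q,x)$ has already been added to $W\cup P$ and hence already has its $\mathit{level}$ value fixed, so the $\max$ in line~30 is well-defined and the strict inequality follows directly from the ``$1+{}$''. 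With that clarification your preliminary claim (``$\mathit{level}$ is set exactly once, and every state causing $q$'s insertion already has a fixed level'') is enough to carry all three parts.
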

\fi


\subsection{Compatibility}
Two states of a suspension automaton are \emph{compatible} \cite{detimpnondetspec,iotsioco} 
if a tester may not
be able to distinguish them in the presence of an adversarial SUT. For example, if the tester wants to determine whether the SUT behaves according to state 2 or 3 of the suspension automaton of \figureref{fig:spec}, taking output transition $x$ will result in reaching state 4, from both states, but after reaching state 4, it cannot be determined, from which of the two states the $x$ transition was taken. Hence, states 2 and 3 are compatible.

\begin{definition} \label{def:compatible}
Let $(Q,T,q_0)$ be a suspension automaton.
A relation $R \subseteq Q \times Q$ is a \emph{compatibility relation} if for all $(q, q') \in R$ we have
  \begin{eqnarray*}
  & &\forall a \in \inpm{q} \cap \inpm{q'}: (T(q,a), T(q',a)) \in R \text{, and}\\
  & &\exists x \in \outm{q} \cap \outm{q'}: (T(q,x), T(q',x)) \in R
  \end{eqnarray*}
Two states $q, q' \in Q$ are \emph{compatible}, denoted $q \compatible q'$, if there exists a compatibility relation $R$ relating $q$ and $q'$.
Otherwise, the states are \emph{incompatible}, denoted by $q \not\compatible q'$.
For $P \subseteq Q$ a set of states, we write $\Diamond(P)$ to denote that all
states in $P$ are pairwise compatible, i.e., $\forall q, q'\in P : q \compatible q'$.
\end{definition}

We note that the compatibility relation is symmetric and reflexive, but not transitive.
For an elaborate discussion of compatibility, we refer the reader to \cite{ncompletejournal}.
The notions of compatibility and validity can be related using
the following synchronous composition operator:

\begin{definition}
\label{def:composition}
Let	$A_1 = (Q_1,T_1,q^1_0)$ and $A_2 = (Q_2,T_2,q^2_0)$ be automata.
The \emph{synchronous composition} of $A_1$ and $A_2$, notation $A_1 \| A_2$, is the automaton
$A = (Q_1 \times Q_2 ,T ,(q^1_0, q^2_0))$, where transition function $T$ is given by:
\begin{eqnarray*}
T((q_1, q_2), \mu) & = & \left\{ \begin{array}{ll}
	(T_1(q_1, \mu), T_2(q_2, \mu)) & \mbox{if } T(q_1,\mu) \downarrow \mbox{ and } T(q_2,\mu) \downarrow\\
	\mbox{undefined} & \mbox{otherwise}
	\end{array}\right.
\end{eqnarray*}
\end{definition}

The next lemma asserts that
states $q$ and $q'$ are compatible precisely when the pair $(q,q')$
is a valid state of $S$ composed with itself.\footnote{This is a  variation of Lemma 22 from \cite{ncompletejournal}, which is stated for a
slightly different composition operator that involves demonic completions.
Adding demonic completions is useful in the setting of \cite{ncompletejournal},
but not needed for our purposes.}

\begin{lemma}
\label{compatibility reduces to validity}
Let $S = (Q,T,q_0)$ be a suspension automaton with $q, q'\in Q$.
Then $q \compatible q'$ if and only if $(q, q')$ is a valid state of $S \| S$.
\end{lemma}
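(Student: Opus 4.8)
The plan is to observe that the definition of a \emph{compatibility relation} on $S$ (\definitionref{def:compatible}) and the fixpoint characterization of the set of \emph{valid states} of $S \| S$ (\lemmaref{lemma game}(2)) are, after unfolding the composition, one and the same condition. So first I would record the elementary facts about $U := S \| S = (Q \times Q, T_{\|}, (q_0,q_0))$: for all $p, p' \in Q$, the inputs enabled in the composite state $(p,p')$ are $\inpm{p} \cap \inpm{p'}$, the outputs enabled are $\outm{p} \cap \outm{p'}$, and whenever a label $\mu$ is enabled in both $p$ and $p'$ we have $T_{\|}((p,p'),\mu) = (T(p,\mu), T(p',\mu))$. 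All three are immediate from \definitionref{def:composition} and the definitions of $\inpm{\cdot}$ and $\outm{\cdot}$, and together they let me translate any statement about a state $(p,p')$ of $U$ into a statement about the pair $p,p'$ in $S$, and vice versa. Write $V \subseteq Q \times Q$ for the set of valid states of $U$.

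For the ``only if'' direction, assume $q \compatible q'$ and fix a compatibility relation $R$ with $(q,q') \in R$. Using the identities above, I would check that $R$, read as a set of states of $U$, is closed in the sense of \lemmaref{lemma game}(2): if $(p,p') \in R$, then the first clause of \definitionref{def:compatible} is exactly ``$T_{\|}((p,p'),a) \in R$ for every input $a$ enabled in $(p,p')$'', and the second clause is exactly ``$T_{\|}((p,p'),x) \in R$ for some output $x$ enabled in $(p,p')$''. Since $V$ is, by \lemmaref{lemma game}(2), the \emph{largest} set with this closure property, $R \subseteq V$; in particular $(q,q') \in V$, i.e., $(q,q')$ is a valid state of $S \| S$.

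For the ``if'' direction I would run the translation in the other direction: by \lemmaref{lemma game}(2) the set $V$ is itself closed, so for every $(p,p') \in V$ we have $T_{\|}((p,p'),a) \in V$ for all inputs $a$ enabled in $(p,p')$ and $T_{\|}((p,p'),x) \in V$ for some enabled output $x$; re-reading these two properties through the identities shows precisely that $V$ is a compatibility relation on $S$. Hence if $(q,q')$ is valid, then $(q,q') \in V$ and $V$ witnesses $q \compatible q'$. The argument is essentially bookkeeping; the one place to be careful is the quantifier ranges --- the clauses of \definitionref{def:compatible} range over the \emph{shared} enabled inputs (resp.\ outputs) of $p$ and $p'$, and one must make sure this coincides with the inputs (resp.\ outputs) enabled in the \emph{composite} state, which is exactly why $T_{\|}$ is defined to be undefined as soon as one component is. No reasoning about the underlying reachability game is needed beyond the fixpoint characterization already supplied by \lemmaref{lemma game}.
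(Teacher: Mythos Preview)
Your proof is correct and follows essentially the same approach as the paper: both directions identify a compatibility relation with a set satisfying the closure property of \lemmaref{lemma game}(2), using the obvious translation through \definitionref{def:composition}. If anything, your version is slightly more careful---you explicitly invoke maximality of $V$ to get $R \subseteq V$, whereas the paper somewhat imprecisely writes that ``$R$ is the set $P$'' rather than that $R$ is contained in the largest such set.
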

\begin{proof}
    ($\Leftarrow$)
	Suppose that $(q, q')$ is a valid state of $S \| S$.
	Then, by \lemmaref{lemma game}(1), $(q,q')$ is contained in the largest subset $P$ of
	the states of $S \| S$ that satisfies the conditions of \lemmaref{lemma game}(2).
	Using \definitionref{def:composition}, we infer that, for all $(r, r') \in P$:
	\begin{eqnarray*}
	&&\forall a \in \inpm{r} \cap \inpm{r'}: (T(r,a), T(r',a)) \in P \text{, and}\\
	&&\exists x \in \outm{r} \cap \outm{q'}: (T(r,x), T(r',x)) \in P
	\end{eqnarray*}
	But this means that $P$ is a compatibility relation, and therefore $q \compatible q'$.

	($\Rightarrow$) Suppose that $q \compatible q'$.
	Then, by \definitionref{def:compatible}, there exists a compatibility
	relation $R$ relating $q$ and $q'$.
	Since $R \subseteq Q \times Q$, $R$ is a subset of the set of states
	of $S \| S$.
	By combining Definitions~\ref{def:compatible} and \ref{def:composition},
	we infer that $R$ is the set $P$ from \lemmaref{lemma game}(2).
	This implies that $(q, q')$ is a valid state of $S \| S$. \qed
\end{proof}

\begin{example}
 \figureref{fig:comptransitive} shows the synchronization of the suspension automaton of \figureref{fig:spec}. It has 6 valid states, and in particular it shows that $2 \compatible 3$.
\end{example}

\lemmaref{compatibility reduces to validity} suggests an efficient algorithm
for computing compatibility of states.
Suppose $S$ is a suspension automaton with $n$ states and $m$ transitions,
with $m \geq n$.
Then we may compute composition $S \| S$ in time $\bigO(m(n + \log m))$.
The idea is that we first sort the list of transitions on the value of their
action label, which takes $\bigO(m \log m)$ time.
Next we check for each transition $t= (q, \mu, q')$ what are the possible transitions
that may synchronize with $t$. Since $t$ may only synchronize with $\mu$-transitions,
and since there are at most $n$ $\mu$-transitions (as $S$ is deterministic), we may
compute the list of transitions of the composition in $\bigO(m n)$ time.
Thus, the overall time complexity of computing $S \| S$ is $\bigO(m(n+\log m))$.
The composition $S \| S$ has $n^2$ states and $\bigO(m n)$ transitions.
Next we use Algorithm~\ref{blockingalg} to compute the set of invalid states of $S \| S$,
which requires $\bigO(m n)$ time. Two states $q$ and $q'$ of $S$ are compatible
iff $(q,q')$ is not in this set.
Altogether, we need $\bigO(m(n+ \log m))$ time to compute the 
compatible state pairs.

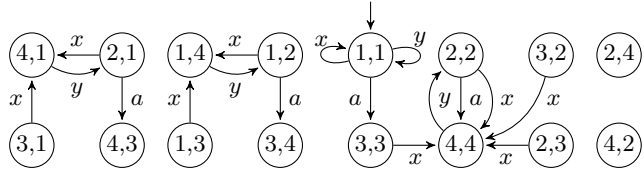
\begin{figure}
\centering
 \begin{tikzpicture}[shorten >=1pt,node distance=1.2cm,>=stealth']
	\tikzstyle{every state}=[draw=black,text=black,inner sep=1pt,minimum
	size=10pt,initial text=]
	\node[state,initial,initial where=above] (1) {1,1};
	\node[state] (2) [right of=1] {2,2};
	\node[state] (3) [below of=1] {3,3};
	\node[state] (4) [below of=2] {4,4};
	\node[state] (23) [right of=4] {2,3};
	\node[state] (32) [right of=2] {3,2};
	\node[state] (12) [left of=1] {1,2};
	\node[state] (34) [left of=3] {3,4};
	\node[state] (14) [left of=12] {1,4};
	\node[state] (13) [left of=34] {1,3};
	\node[state] (21) [left=3mm of 14] {2,1};
	\node[state] (41) [left of=21] {4,1};
	\node[state] (43) [left=3mm of 13] {4,3};
	\node[state] (24) [right=3mm of 32] {2,4};
	\node[state] (42) [right=3mm of 23] {4,2};
	\node[state] (31) [left of=43] {3,1};
	\path[->]
	(1) edge [loop left] node [above] {$x$} (1)
	(1) edge [loop right] node [above] {$y$} (1)
	(1) edge node [left] {$a$} (3)
	(2) edge node [right] {$a$} (4)
	(2) edge [bend left=50] node [right] {$x$} (4)
	(3) edge node [below] {$x$} (4)
	(4) edge [bend left=50] node [right] {$y$} (2)
	(23) edge node [below] {$x$} (4)
    (32) edge [bend left] node [above right=0.5mm and 2mm] {$x$} (4)
    (12) edge node [right] {$a$} (34)
    (12) edge node [above] {$x$} (14)
    (21) edge node [right] {$a$} (43)
    (21) edge node [above] {$x$} (41)
    (14) edge [bend right] node [below] {$y$} (12)
    (41) edge [bend right] node [below] {$y$} (21) 
    (13) edge node [left] {$x$} (14)
    (31) edge node [left] {$x$} (41)
	;
	\end{tikzpicture}
    \caption{Synchonous composition of the suspension automaton from \figureref{fig:spec}.}
 \label{fig:comptransitive}
\end{figure}

\section{Test Cases}

In this section, we introduce a simple notion of \emph{test cases}.
The goal of these test cases is \emph{state identification}, i.e.,
to explore whether a state of the SUT, that is
reached after some initial interactions, has the same traces as the state where it
should be according to a given suspension automaton.
Our test cases are adaptive in the sense that inputs that are sent to the SUT
may depend on previous outputs generated by the SUT.
They are similar to the adaptive distinguishing sequences of Lee \& Yannakakis
\cite{leeyannakakis}, except that inputs and outputs do not necessarily alternate,
and the graph structure is a DAG rather than a tree.

\begin{definition}
A \emph{test case} is an acyclic automaton $A=(Q,T,q_0)$ such that each state $q\in Q$
enables either a single input action, or zero or more output actions.
We refer to states that enable a single input as \emph{input states}, and
states that enable at least one output as \emph{output states}.
Thus each state from a test case is either an input state, an output state, or a leaf.
\end{definition}

To each test case $A$ we associate a set of \emph{observations}: maximal traces that we may observe during a run of $A$.

\begin{definition}
For each test case $A$, $\Obs{A}$ is the set of traces that reach a leaf of $A$: $\Obs{A} = \{\sigma \in \traces{A} \mid \after{A}{\sigma} \subseteq \leaves{A}\}$.
\end{definition}

Given a suspension automaton $S$, we only want to consider test cases $A$ that are consistent with $S$ in the sense that each input  that is provided by $A$ is also specified by $S$, and conversely each output that is allowed by $S$ also occurs in $A$.

\begin{definition} \label{def:testcase}
Let $A=(Q,T,q_0)$ be a test case and $S=(Q',T',q'_0)$ a suspension automaton.
We say that $A$ is \emph{a test case for} $S$ if, for each state $(q,q')$ of $A \| S$
reachable from the initial state $(q_0, q'_0)$:
\begin{itemize}
	\item 
	if $q$ is an input state then $\inp{q} \subseteq \inp{q'}$,
	\item 
	if $q$ is an output state then $\out{q'} \subseteq \out{q}$.
\end{itemize}
We say $A$ \emph{is a test case for} state $q'\in Q'$ if $A$ is a test case for $S / q'$.
Furthermore, $A$ \emph{is a test case for} a set of states $P \subseteq Q'$ if $A$ is a test case for all $q'\in P$.
\end{definition}

\iflong
\begin{lemma}
\label{la: test cases preserved by transitions}
Suppose $A = (Q, T, q_0)$ is a test case for a set $P$ of states of suspension automaton $S$.
Suppose that $T(q_0, \mu) = q_1$, for some label $\mu$ and state $q_1$.
Then $A / q_1$ is a test case for $\after{P}{\mu}$.
\end{lemma}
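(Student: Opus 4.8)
The statement to prove is \lemmaref{la: test cases preserved by transitions}: if $A = (Q, T, q_0)$ is a test case for a set $P$ of states of $S$, and $T(q_0, \mu) = q_1$, then $A / q_1$ is a test case for $\after{P}{\mu}$. My plan is to verify the two bullet conditions of \definitionref{def:testcase} directly, by relating reachable states of $(A / q_1) \| (S / p')$ to reachable states of $A \| (S / p)$ for the appropriate $p \in P$.

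First I would unwind the definition: to show $A / q_1$ is a test case for $\after{P}{\mu}$, I must show that for every $p' \in \after{P}{\mu}$, the automaton $A / q_1$ is a test case for $S / p'$, i.e., for every state $(q, q')$ of $(A / q_1) \| (S / p')$ reachable from $(q_1, p')$, the input/output containment conditions hold. By \definitionref{def:operations}, $p' \in \after{P}{\mu}$ means there is some $p \in P$ with $p' \in \after{p}{\mu}$, i.e., $T'(p, \mu) = p'$ (using determinism of $S$ and the fact that $\mu \in L$). Since $A$ is a test case for $S / p$ by hypothesis, the containment conditions hold for every state $(\hat q, \hat q')$ of $A \| (S / p)$ reachable from $(q_0, p)$.

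The key observation is that $(q_0, p)$ has the $\mu$-transition to $(q_1, p')$ in $A \| (S / p)$, since $T(q_0, \mu) = q_1$ and $T'(p, \mu) = p'$ are both defined (so by \definitionref{def:composition} the composite transition is defined and equals $(q_1, p')$). Therefore every state reachable from $(q_1, p')$ in $(A / q_1) \| (S / p')$ is also reachable from $(q_0, p)$ in $A \| (S / p)$ --- because the composition only depends on the transition functions, which are the same, and we have merely prefixed the path with the single $\mu$-step. Hence the containment conditions, which hold at all states reachable from $(q_0, p)$ in $A \| (S/p)$, in particular hold at all states reachable from $(q_1, p')$ in $(A/q_1) \| (S/p')$. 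This gives exactly what is needed.

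The only mild subtlety --- and the step I would state carefully rather than the "hard part" --- is the bookkeeping that a state of $(A/q_1)\|(S/p')$ reachable from $(q_1,p')$ is literally the same pair, reached by the same trace (prepended with $\mu$), in $A\|(S/p)$; this follows because $A/q_1$ and $A$ share the transition function $T$, and likewise $S/p'$ and $S/p$ share $T'$, so the composed transition function of $(A/q_1)\|(S/p')$ agrees with that of $A\|(S/p)$. I should also note $A/q_1$ is still acyclic (it is a sub-automaton of the acyclic $A$) and still satisfies the test-case structural constraint (each state enables a single input or only outputs), since these are properties of individual states of $A$, unchanged by moving the initial state. With these remarks the proof is essentially a one-line reachability argument.
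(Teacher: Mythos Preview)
The paper states \lemmaref{la: test cases preserved by transitions} without proof, so there is nothing to compare against; your argument is the natural direct verification and is correct. You unfold \definitionref{def:testcase} for the set $\after{P}{\mu}$, pick $p\in P$ with $T'(p,\mu)=p'$, observe that $(q_0,p)\xrightarrow{\mu}(q_1,p')$ in $A\|(S/p)$, and use that the composed transition function of $(A/q_1)\|(S/p')$ coincides with that of $A\|(S/p)$ to conclude that every state reachable from $(q_1,p')$ is already reachable from $(q_0,p)$, whence the input/output containment conditions carry over; the side remark that $A/q_1$ remains an acyclic automaton satisfying the per-state test-case constraint is also correct, since $A/q_1$ shares $Q$ and $T$ with $A$.
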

\fi

If $A$ is a test case for a suspension automaton $S$ then the composition $A \| S$ is also a test case.
We can view $A \| S$ as the subautomaton of $A$ in which all outputs that are not enabled in $S$ have been pruned away.
A test case distinguishes two states, if the states enable different observable traces of the test case.

\begin{lemma}
If $A$ is a test case for a suspension automaton $S$, then the composition $A \| S$ is
also a test case for $S$, satisfying $\Obs{A \| S} \subseteq \Obs{A}$.
\end{lemma}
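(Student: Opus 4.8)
The plan is to prove three things in sequence: that $A \| S$ is a test case (an acyclic automaton in which every state is an input state, an output state, or a leaf), that it is a test case \emph{for} $S$ (satisfies the consistency conditions of \definitionref{def:testcase}), and that $\Obs{A \| S} \subseteq \Obs{A}$. Throughout I would exploit the fact that, by \definitionref{def:composition}, every state of $A \| S$ has the form $(q, q')$ with $q$ a state of $A$ and $q'$ a state of $S$, and a transition $((q,q'), \mu, (r,r'))$ exists in $A \| S$ exactly when both $T_A(q,\mu) = r$ and $T_S(q',\mu) = r'$; consequently $\out{(q,q')} = \out{q} \cap \out{q'}$, $\inp{(q,q')} = \inp{q} \cap \inp{q'}$, and $\init{(q,q')} = \init{q} \cap \init{q'}$.

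First, acyclicity: any path in $\mathit{digraph}(A \| S)$ projects onto a path in $\mathit{digraph}(A)$ of the same length, so if $A \| S$ had a cycle then $A$ would too; since $A$ is acyclic (it is a test case), so is $A \| S$. Next, for the test-case structure of $A \| S$, fix a state $(q,q')$. If $q$ is a leaf of $A$ then $\init{q} = \emptyset$, hence $\init{(q,q')} = \emptyset$ and $(q,q')$ is a leaf. If $q$ is an input state, then $\init{q} = \inp{q}$ is a single input $a$; then $\init{(q,q')} \subseteq \{a\}$, so $(q,q')$ enables at most one action and it is an input — hence $(q,q')$ is an input state or a leaf. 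If $q$ is an output state, then $\init{q} = \out{q} \subseteq O$, so $\init{(q,q')} = \out{q} \cap \init{q'} \subseteq O$, meaning $(q,q')$ enables only outputs and is thus an output state or a leaf. In all cases the defining property of a test case holds.

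For the claim that $A \| S$ is a test case for $S$, I would invoke \definitionref{def:testcase} applied to the composition $(A \| S) \| S$. Because synchronous composition is associative (up to the obvious isomorphism $((q_1,q_2),q_3) \leftrightarrow (q_1,(q_2,q_3))$, which preserves reachability and the transition relation), every reachable state of $(A\|S)\|S$ corresponds to a reachable state $(q,(q',q''))$ of $A\|(S\|S)$, and in fact the relevant instances collapse: a reachable state of $(A\|S)\|S$ has the form $((q,q'),q')$ — the two $S$-components stay synchronized, since they start equal and receive the same labels. Here the first component behaves as in $A\|S$, i.e.\ it is a leaf/input/output state of $A\|S$ exactly as analyzed above, and the consistency conditions I must verify — $\inp{(q,q')} \subseteq \inp{q'}$ when $(q,q')$ is an input state, and $\out{q'} \subseteq \out{(q,q')}$ when it is an output state — follow from $\inp{(q,q')} = \inp{q}\cap\inp{q'} \subseteq \inp{q'}$ and from the hypothesis that $A$ is a test case for $S$ (which gives $\out{q'} \subseteq \out{q}$ at the corresponding reachable state of $A\|S$, hence $\out{q'} \subseteq \out{q}\cap\out{q'} = \out{(q,q')}$). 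Finally, $\Obs{A\|S} \subseteq \Obs{A}$: if $\sigma \in \Obs{A\|S}$ then $\sigma$ reaches a leaf of $A\|S$ from its initial state, so in particular $\sigma \in \traces{A\|S}$, which forces $\sigma \in \traces{A}$ (projecting the witnessing path onto $A$); and $\after{A}{\sigma}$ is the $A$-projection of $\after{A\|S}{\sigma} \subseteq \leaves{A\|S}$, and a state $(q,q')$ being a leaf of $A\|S$ — i.e.\ $\init{q}\cap\init{q'}=\emptyset$ — does \emph{not} immediately give $\init{q}=\emptyset$. This is the one genuinely delicate point: I need the fact that $A$ is a test case \emph{for} $S$ so that at a reachable state $(q,q')$ of $A\|S$, if $q$ is an input state then $\inp{q}\subseteq\inp{q'}$ and if $q$ is an output state then $\out{q'}\subseteq\out{q}$; combined with $S$ being non-blocking ($\out{q'}\neq\emptyset$), this rules out $\init{q}\cap\init{q'}=\emptyset$ unless $\init{q}=\emptyset$, i.e.\ $q$ is a leaf of $A$. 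Hence $\after{A}{\sigma}\subseteq\leaves{A}$ and $\sigma\in\Obs{A}$.

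The main obstacle is exactly this last step: a leaf of the composition need not project to a leaf of $A$ without using both the consistency of $A$ with $S$ and the non-blocking property of $S$, so the argument must be set up to have those hypotheses available at the right reachable states — which is why reachability (not just the static state set) is tracked in \definitionref{def:testcase} and why I route the structural reasoning through \lemmaref{la: test cases preserved by transitions} to keep "test case for $\after{P}{\mu}$" as an invariant along paths.
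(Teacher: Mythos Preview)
The paper states this lemma without proof, so there is nothing to compare against directly. Your argument is correct and complete: the three parts (acyclicity and the input/output/leaf trichotomy for $A\|S$; the consistency conditions of \definitionref{def:testcase} for $(A\|S)\|S$ via the diagonal observation that reachable states there have the form $((q,q'),q')$; and the inclusion $\Obs{A\|S}\subseteq\Obs{A}$) are all sound, and you correctly isolate the one nontrivial step, namely that a leaf $(q,q')$ of $A\|S$ reachable from the initial state must have $q$ a leaf of $A$, which genuinely needs both the consistency hypothesis on $A$ and the non-blocking property of $S$.

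One minor remark: your closing sentence says you ``route the structural reasoning through \lemmaref{la: test cases preserved by transitions}'', but in fact your proof never invokes that lemma --- you argue directly from \definitionref{def:testcase} at reachable states of $A\|S$, which is the cleaner route anyway. You can simply drop that final clause.
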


\begin{definition} \label{def:distinguishtestcase}
Let $A$ be a test case for states $q$ and $q'$ of suspension automaton $S$.
Then $A$ \emph{distinguishes} $q$ and $q'$ if $\Obs{A \| (S / q)} \cap \Obs{A \| (S / q')}  =  \emptyset$.
\end{definition}

\begin{example}
 The associated automaton of the CCS expression $a.(x.\Nil+y.\Nil)$ (see \exampleref{exmp:CCS}) is a test case for states 1 and 2 of the suspension automaton from \figureref{fig:spec}. Its observable traces are $\{ax,ay\}$, and it distinguishes states 1 and 2.
\end{example}

\begin{lemma} \label{lem:disttreenotcompatible}
Let $S = (Q, T, q_0)$ be a suspension automaton with $q, q' \in Q$. Then $q \not\compatible q'$ iff there exists a test case that distinguishes $q$ and $q'$.
\end{lemma}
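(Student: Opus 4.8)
### Proof strategy for Lemma~\ref{lem:disttreenotcompatible}

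The plan is to prove both directions by exploiting Lemma~\ref{compatibility reduces to validity}, which reduces compatibility of $q,q'$ to validity of $(q,q')$ in $S \| S$, together with Lemma~\ref{lemma game} characterizing (in)valid states. The key conceptual bridge is that a test case that distinguishes $q$ and $q'$ is essentially a winning strategy for the tester in the validity game on $S \| S$ started from $(q,q')$: an input state of the test case corresponds to the tester choosing an input, and an output state corresponds to the tester choosing $\theta$. A leaf of the test case will correspond to a state $(r,r')$ of $S \| S$ where the outputs of $r$ and $r'$ are disjoint — a blocking state of $S \| S$ in the game sense.

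For the direction ($\Leftarrow$), I would assume $A$ is a test case for $q$ and $q'$ that distinguishes them, i.e.\ $\Obs{A \| (S/q)} \cap \Obs{A \| (S/q')} = \emptyset$, and show $q \not\compatible q'$ by contradiction. Suppose $q \compatible q'$; then there is a compatibility relation $R$ with $(q,q') \in R$. I would argue inductively along the DAG structure of $A$ (using that $A$ is acyclic, so induction on the longest path from $q_0$ is available) that we can build a common observation: starting from $(q_0, q, q')$ in $A \| S \| S$, at every input state of $A$ the single input $a$ chosen by $A$ is — by \definitionref{def:testcase} — enabled in both $S/q$-residual and $S/q'$-residual states, hence $R$ relates the successors; at every output state of $A$, the compatibility clause of \definitionref{def:compatible} gives a shared output $x \in \outm{r} \cap \outm{r'}$, and since $A$ is a test case this $x$ is enabled in $A$ as well. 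Since $A$ is acyclic and every non-leaf makes progress, this process reaches a leaf of $A$ along a trace $\sigma$ that is in both $\Obs{A \| (S/q)}$ and $\Obs{A \| (S/q')}$, contradicting disjointness. (Lemma~\ref{la: test cases preserved by transitions} is the natural inductive tool for pushing "test case for a set of states" through a transition.)

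For the direction ($\Rightarrow$), assume $q \not\compatible q'$. By Lemma~\ref{compatibility reduces to validity}, $(q,q')$ is an \emph{invalid} state of $S \| S$, so by Lemma~\ref{lemma game}(1) it lies in the least set $P$ closed under the stated rule, and — invoking the construction behind \algorithmref{blockingalg}, in particular the $\mathit{level}$ function (Lemma~\ref{lemma move and level}) — the tester has a winning strategy $\mathit{move}$ from $(q,q')$ with a well-founded $\mathit{level}$ ranking. I would then construct, by recursion on $\mathit{level}$, a CCS expression $F$ (hence, via \definitionref{def:CCS}, an acyclic automaton) that realizes this strategy: if $\mathit{move}((r,r')) = a$, set $F_{(r,r')} = a . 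F_{(T(r,a),T(r',a))}$; if $\mathit{move}((r,r')) = \theta$, set $F_{(r,r')} = \sum_{x \in \outm{r} \cup \outm{r'}} x . F_{(T(r,x),T(r',x))}$ where, for outputs $x$ enabled in only one of $r,r'$, the corresponding branch is simply $x.\Nil$ (such a branch witnesses that the two components have already diverged); if $(r,r')$ is blocking in $S \| S$, i.e.\ $\outm{r} \cap \outm{r'} = \emptyset$, then the outputs are disjoint so $F_{(r,r')} = \sum_{x \in \outm r \cup \outm{r'}} x.\Nil$ is already a discriminating leaf-level expression. The well-foundedness of $\mathit{level}$ (Lemma~\ref{lemma move and level}(2),(3)) ensures the recursion terminates and the resulting expression is deterministic and acyclic; one checks that $A = A_F$ is a test case for $q$ and for $q'$ directly from \definitionref{def:testcase} (inputs chosen are enabled by $\mathit{move}$ being a legal tester action, and all outputs of $S$ are included by construction). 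Finally, any observation in $\Obs{A \| (S/q)} \cap \Obs{A \| (S/q')}$ would correspond to a play consistent with the tester's winning strategy that never reaches a blocking state of $S \| S$, contradicting that the strategy is winning; hence the intersection is empty and $A$ distinguishes $q$ and $q'$.

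The main obstacle I anticipate is the ($\Rightarrow$) construction: getting the branching at $\theta$-moves exactly right so that the resulting expression is simultaneously (i) deterministic in the sense of \definitionref{def:CCS}, (ii) a genuine test case for \emph{both} $S/q$ and $S/q'$ per \definitionref{def:testcase} — which requires that every output enabled in \emph{either} residual state be present in $A$, not just the shared ones — and (iii) guaranteed to reach, on every branch, a state of $A \| S$ where the two residuals of $S$ have already diverged. Handling output labels enabled in only one component (where $T(r,x)$ or $T(r',x)$ is undefined, so the product transition is undefined) needs the careful convention above: such a branch terminates immediately because the observation produced already lies in only one of the two observation sets. Making this bookkeeping precise, and confirming termination via the $\mathit{level}$ ranking, is where the real work lies; the ($\Leftarrow$) direction is a routine induction on the acyclic structure of $A$.
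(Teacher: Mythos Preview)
Your proposal is correct, and for the $(\Rightarrow)$ direction it follows essentially the same strategy as the paper: invoke Lemma~\ref{compatibility reduces to validity} to identify $q \not\compatible q'$ with invalidity of $(q,q')$ in $S \| S$, then turn the tester's winning strategy (computed by Algorithm~\ref{blockingalg}) into a distinguishing test case. The only difference is packaging: the paper builds the test case directly as an automaton whose states are the invalid pairs of $S \| S$ together with a single sink leaf $l$ (so the result is a DAG), while you build a CCS term by recursion on $\mathit{level}$ (so the result unfolds to a tree). Both realizations use the same $\mathit{move}/\mathit{level}$ data and the same case split on input-move versus $\theta$-move states, and both terminate observations precisely at outputs lying in $\outm{r}\setminus\outm{r'}$ or $\outm{r'}\setminus\outm{r}$; the paper's DAG is more compact, your CCS version reuses the syntax machinery already set up in \definitionref{def:CCS}. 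For the $(\Leftarrow)$ direction, note that the paper's written proof actually does not argue it at all --- it only constructs the test case from incompatibility --- so your contradiction argument (walk through $A$ using the compatibility relation to pick a shared output at every output state, arriving at a common observation) genuinely supplies something the paper leaves implicit.
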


\begin{proof}
By Lemma~\ref{compatibility reduces to validity},
$q \not\compatible q'$ iff the pair $(q,q')$ is an invalid state of $S \| S$.
By definition, this means that in the game for $S \| S$ the tester has a winning strategy $\mathit{move}$.
This strategy can be effectively computed by Algorithm~\ref{blockingalg}.
Using strategy $\mathit{move}$, we compute a test case $A$ as follows:
\begin{itemize}
	\item 
	The set of states consists of the set $P$ of invalid states of $S \| S$, extended with a
	single leaf state $l$.
	\item 
	The initial state is $(q, q')$.
	\item 
	The transition relation of $A$ is obtained by 
	(a) restricting the transition relation of $S \| S$ to $P$,
	(b) removing all input transitions, except the outgoing transitions with label
	$\mathit{move} (r,r')$ from states with $\mathit{move} (r,r') \in I$,
	(c) adding an output transition $((r,r'), x, l)$ for each $(r,r')\in P$ and
	$x \in O$ such that $\mathit{move}(r,r')=\theta$ and $(r,r')$ does not have an outgoing $x$-transition.
\end{itemize}
It is routine to check that $A$ is a test case for states $q$ and $q'$ of $S$.
We claim that $A$ distinguishes $q$ and $q'$, that is,
$\Obs{A \| S / q} \cap \Obs{A \| S / q'}  =  \emptyset$.
Because suppose $\sigma \in \Obs{A \| S / q}$.
Then $\sigma$ corresponds to a run from initial state $(q,q')$ of $A$ to leaf node $l$. By construction of $A$, $\sigma$ must be of the form $\rho x$, where
$\rho$ corresponds to a run in $A$ from $(q,q')$ to some state $(r,r')$ and $x \in \outm{r} \setminus \outm{r'}$.
This means that $A \| S / q'$ has a run with actions $\rho$ from initial state
$((q,q'),q')$ to state $((r,r'),r')$.
However, since $x \not\in \outm{r'}$,
$\sigma \not\in \Obs{A \| S / q'}$.
By a symmetric argument, we may conclude that $\sigma \in \Obs{A \| S / q'}$
implies $\sigma \not\in \Obs{A \| S / q}$.
Thus $\Obs{A \| S / q} \cap \Obs{A \| S / q'}  =  \emptyset$, as required.
\qed
\end{proof}

The following definition generalizes the notion of adaptive distinguishing sequence for
FSM's \cite{Gill62,leeyannakakis} to the setting of suspension automata.

\begin{definition}
Let $S= (Q, T, q_0)$ be a suspension automaton, $P \subseteq Q$, and $A$ a test case for $P$.
We say that $A$ is an \emph{adaptive distinguishing graph} for $P$ if,
for all $q, q' \in P$ with $q \not\compatible q'$, $A$ distinguishes $q$ and $q'$.
Test case $A$ is an \emph{adaptive distinguishing graph} for $S$ if it is an adaptive distinguishing graph for the set $Q$ of states of $S$.
\end{definition}

Just like there are FSMs without an adaptive distinguishing sequence,
there are suspension automata for which no adaptive distinguishing graph exists.
This is the case for the suspension automaton from \figureref{fig:noadg}. 
We cannot construct an adaptive distinguishing graph by choosing the root node to be an output state, since states 1 and 3 cannot be distinguished, as they both go to state 2 with their single output transition $y$.
The root also cannot be an input state for either of all inputs $a$ or $b$.
After $a$, states 1 and 2 both reach state 1, and after $b$, states 2 and 3 both reach state 3.

In the remainder of this paper, we present algorithms for constructing an adaptive distinguishing graph for $S$ from a suspension automaton $S$, if it exists. 

\begin{figure}
\centering
 \begin{tikzpicture}[shorten >=1pt,node distance=20mm,>=stealth']
	\tikzstyle{every state}=[draw=black,text=black,inner sep=1pt,minimum
	size=10pt,initial text=]
	\node[state,initial,initial where=above] (1) {1};
	\node[state] (2) [right of=1] {2};
	\node[state] (3) [right of=2] {3};
	\path[->]
	(1) edge [loop left] node [above] {$a$} (1)
	(1) edge node [above] {$y$} (2)
	(1) edge [bend left=60] node [above] {$b$} (2)
	(2) edge [bend left] node [below] {$a$} (1)
	(2) edge [bend right] node [below] {$b$} (3)
	(2) edge [loop below] node [below] {$x$} (2)
	(3) edge node [above] {$y$} (2)
	(3) edge [bend right=60] node [above] {$a$} (2)
    (3) edge [loop right] node [right] {$b$} (3)
	;
	\end{tikzpicture}
	\vspace{-5mm}
    \caption{A suspension automaton without adaptive distinguishing graph.}
 \label{fig:noadg}
\end{figure}
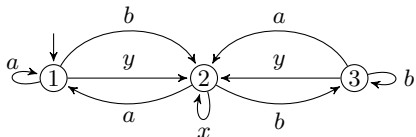

\section{Splitting Graphs}
\label{sec:algorithms}

In this section, we present the concept of a splitting graph, as well as an algorithm for constructing such a graph.
Our algorithm generalizes the algorithm of Lee \& Yannakakis \cite{leeyannakakis} for computing a splitting tree for an FSM.
In the next section, we will construct an adaptive distinguishing graph by extracting its parts from the splitting graph.
An adaptive distinguishing graph that distinguishes all incompatible state pairs, is only guaranteed to be found, if some additional requirements on the splitting graph construction are satisfied.
We will delay the discussion of adaptive distinguishing graphs to the next section, and focus on splitting graphs first.

We will first give the definition of a splitting graph, and the outer loop of our algorithm for constructing it. 
Then we define when a leaf node of a splitting graph is splittable (i.e., when child nodes can be added), and show that a splittable leaf
exists whenever some leaf contains incompatible states.
After that, we explain how to construct the child nodes for splittable leaves.

\subsection{Splitting Graph Definition}

A splitting graph for suspension automaton $S = (Q,T,q_0)$ is a directed graph in which the vertices are subsets of states of $S$;
there is a single root $Q$
, and an internal node is the union of its children.
We require that, for each edge $(v,c)$ of the splitting graph, $c$ is a proper subset of $v$; this implies that a splitting graph is a DAG.
We associate a test case $W(v)$ to each internal node $v$ and require a tight link between the observations of $W(v)$ and the children of $v$:
each observation $\sigma$ has one child $c$ that contains all states enabling $\sigma$.
As we have $|c| < |v|$, this means that, after following any trace $\sigma$ from test case $W(v)$, the states $v\setminus c$ have been distinguished from states from the states $c$.

 \begin{definition} \label{def:splitgraph}
 A \emph{splitting graph} for suspension automaton $S = (Q,T,q_0)$ is a triple $Y = (V,E,W)$ with 
 \begin{itemize}
 \item 
 $Q \in V \subseteq \mathcal{P}(Q) \setminus \emptyset$
 \item 
 $E \subseteq V \times V$ such that 
 \begin{enumerate}
 	\item
 	 $Q$ is the only root of $Y$,
 	\item 
	 $(v,w) \in E \implies v \supset w$, and
 	\item $v \in \nonleaves{Y} \implies v = \bigcup \post(v)$.
\end{enumerate} 
 \item 
$W : \nonleaves{Y} \rightarrow \Ev$ is a \emph{witness} function such that,
for all internal vertices $v$, $A_{W(v)}$ is a test case such that:\\
$\forall \sigma \in \Obs{A_{W(v)}}, \exists c \in \post(v): \enabled{c}{\sigma} = \enabled{v}{\sigma}$.
\end{itemize}
Splitting graph $Y$ is \emph{complete} if, for each leaf $l$, 
the states contained in $l$ are pairwise compatible, i.e., $\Diamond(l)$.
\end{definition}

\algorithmref{alg:splitgraphconstruction} shows the main loop for constructing a splitting graph for a given suspension automaton.
The idea is to start with the trivial splitting graph with just a single node, and then repeatedly split leaf nodes, i.e., add child nodes, until all leaves only contain pairwise compatible states. 
This means that incompatible states are in different leaves when the algorithm terminates.
Since nodes in a splitting graph are finite sets of states, and children are strict subsets of their parents, \algorithmref{alg:splitgraphconstruction} terminates after a finite number of refinements.
With $\bot$, we denote the empty function.

\begin{algorithm}[!ht]
 \caption{Constructing a splitting graph.}
  \label{alg:splitgraphconstruction}
  \SetKwInOut{Input}{Input}
  \Input{A suspension automaton $S = (Q, T, q_0)$}
  \SetKwInOut{Output}{Output}
  \Output{A complete splitting graph for $S$}  
 
    \SetKwFunction{func}{splitnode}
  $Y := (\{Q\},\emptyset,\bot)$\;
      \While{$\exists l \in \leaves{Y} : \neg \Diamond(l)$}{
	$Y :=$\func($S,Y)$\;
      }
      \Return{Y}\;         
 \end{algorithm}
 
 \subsection{Splitting Conditions}

Before we elaborate on the algorithm for the method {\tt splitnode}, we first explore what conditions should hold for a leaf $l$ to be splittable.
The formal definition of these conditions is given below in \definitionref{def:splitconditions}.

If we are lucky we can find, for each output $x \in \out{l}$, a state $q \in l$ that does not enable $x$.
In this case, observing an output allows us to distinguish at least one state from some other states.
Otherwise, we may check whether, for certain enabled inputs, or all outputs, the states of $l$ have a transition to the states of an internal node, i.e., a node that has already been split, because $l$ then may be split as well when these labels occur.
In particular, the states of $l$ can be split for some label $\mu$ if the reached node is a \emph{least common ancestor} of $\after{l}{\mu}$.
An internal node $v$ is least common ancestor for a set of states $P$ if it contains $P$ but none of its children does.

\begin{definition} \label{def:lca}
 Let $Y$ be a splitting graph for suspension automaton $S$ and let $P$ be a set of states of $S$.
 An internal node $v$ of $Y$ is a \emph{least common ancestor} of $P$ if
 $P \subseteq v$ and, for all $c \in \post(v)$, $P \not\subseteq c$.
 We write $LCA(Y,P)$ for the set of least common ancestors of $P$ contained in $Y$.
\end{definition}

Note that we can compute the set of least common ancestors for any set $P$ in a time that is linear in the size of the splitting graph.

\begin{definition} \label{def:splitconditions}
Let $Y$ be a splitting graph for suspension automaton $S$.
 \begin{enumerate}
  \item \label{outputtransitioncond} A leaf $l$ of $Y$ is \emph{splittable on output} if
  \begin{align*}
  \forall x \in \outm{l}: &(\exists q \in l: x \not\in\out{q})
  \vee LCA(Y,\after{l}{x}) \neq \emptyset
  \end{align*}
  \item \label{inputtransitioncond} A leaf $l$ of $Y$ is \emph{splittable on input} if
  \begin{align*}
  &\exists a \in \inpm{l}: LCA(Y,\after{l}{a}) \neq \emptyset
  \end{align*}
 \end{enumerate}
 A leaf $l$ of $Y$ is \emph{splittable} if it is splittable on output or splittable on input.
\end{definition}

\begin{lemma}
\label{lemma splittable leaf}
	Each incomplete splitting graph has a splittable leaf.
\end{lemma}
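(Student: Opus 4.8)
The plan is to argue by contraposition: I would take an incomplete splitting graph $Y$, so there is at least one leaf that contains an incompatible pair of states, and show that among all such ``bad'' leaves at least one must be splittable. First I would fix the set $B$ of leaves $l$ with $\neg\Diamond(l)$; by assumption $B\neq\emptyset$. The key is to pick from $B$ a leaf that is \emph{minimal in the subset order}, i.e. a leaf $l\in B$ such that no $l'\in B$ satisfies $l'\subsetneq l$. Such a minimal element exists because $Q$ is finite. The heart of the argument is then to show that this minimal bad leaf $l$ is splittable, i.e. satisfies \definitionref{def:splitconditions}(1) or (2).

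The main step is the following dichotomy, driven by \lemmaref{lem:disttreenotcompatible} (or directly by \lemmaref{compatibility reduces to validity} and \lemmaref{lemma game}). Since $l$ contains an incompatible pair, $l$ is not pairwise compatible, and I would unfold what that means via the game characterization of invalid states in $S\|S$ restricted to pairs drawn from $l$. Concretely, for the pair $(q,q')$ with $q\not\compatible q'$, the tester has a winning strategy, and its first move is either some input $a\in\inp{q}\cap\inp{q'}$ leading to a still-incompatible pair, or it is $\theta$, in which case for \emph{every} common output $x\in\out{q}\cap\out{q'}$ the pair $(T(q,x),T(q',x))$ is again invalid, i.e. incompatible — and moreover, in the $\theta$ case, one can also have an output $x$ enabled in one of $q,q'$ but not the other. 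Lifting this to the whole leaf $l$: either (case A) there is an output $x\in\out{l}$ and a state $q\in l$ with $x\notin\out{q}$ — but then already the first disjunct of \definitionref{def:splitconditions}(1) contributes that $x$; or (case B) every $x\in\out{l}$ is enabled in every state of $l$, and then I need to show that for every output $x$, $\after{l}{x}$ is a strict subset of $l$ that... no, that is not automatic. So the real work is: under case B, show that either some input $a\in\inp{l}$ has $\after{l}{a}$ landing inside an already-split (internal) node with an $LCA$, or \emph{every} output $x$ does, or some output $x$ witnesses incompatibility whose image lies in an internal node.

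The crux — and the step I expect to be the main obstacle — is arguing that $\after{l}{\mu}$ (for the relevant label $\mu$) is contained in an \emph{internal} node of $Y$, so that $LCA(Y,\after{l}{\mu})\neq\emptyset$. Here is where minimality of $l$ is essential. Suppose, toward a contradiction, that $l$ is not splittable: then for the ``incompatibility-propagating'' label $\mu$ provided by the game analysis above, $\after{l}{\mu}$ is \emph{not} contained in any internal node, and (in case B) $\mu$ does not separate any state of $l$ on outputs. Since $Q$ itself is the root and is internal (as $Y$ has a leaf different from $Q$ — because $l\in B$ and $Q$ could equal $l$ only if $Y$ is the trivial one-node graph, a case to handle separately by noting $\after{Q}{\mu}\subseteq Q$ and... ), $\after{l}{\mu}\subseteq Q$, so the obstruction is that $\after{l}{\mu}$ is contained in $Q$ but also must be contained in some leaf $l''$ of $Y$ with no internal node strictly between — equivalently $\after{l}{\mu}\subseteq l''$ for a leaf $l''$. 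Now the set of reached pairs from the incompatible pair $(q,q')$ under $\mu$ is still incompatible, hence $l''$ is a bad leaf, i.e. $l''\in B$. I would then show $l''\subsetneq l$, contradicting minimality: the properness uses that $\mu$ is a label for which $S$ is not ``stationary'' on $l$ — if $\mu$ is an input the image can be smaller, and if all the game moves only yield labels with $\after{l}{\mu}=l$ up to the relevant pair, one derives an infinite descent contradicting acyclicity/finiteness, or more cleanly contradicts $q\not\compatible q'$ directly by exhibiting a compatibility relation. Pinning down this properness cleanly (especially ruling out self-loops that keep $\after{l}{\mu}=l$) is the delicate part; I expect the authors handle it by observing that a minimal bad leaf cannot be ``closed'' under all incompatibility-propagating transitions without the pair $(q,q')$ actually being compatible, which is the desired contradiction.
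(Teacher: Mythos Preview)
Your plan has a genuine gap, and it is precisely the step you flag as ``delicate'': you cannot conclude $l''\subsetneq l$, and in fact there is no reason for $l''$ to be comparable to $l$ at all in the subset order. Concretely, after one move $\mu$ of the tester's winning strategy for the incompatible pair $(q,q')\in l$, the resulting pair $(r,r')=(T(q,\mu),T(q',\mu))$ lies in $\after{l}{\mu}$, and if that set is contained in some leaf $l''$ you only know $\{r,r'\}\subseteq l''$ with $r\not\compatible r'$. The set $\after{l}{\mu}$ need not be a subset of $l$ (the transition function can send $l$ anywhere in $Q$), so $l''$ can be larger than $l$, smaller, or incomparable. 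Subset-minimality of $l$ among bad leaves therefore yields no contradiction, and neither of your suggested fallbacks works: ``infinite descent'' fails because you have no quantity that strictly decreases, and ``exhibit a compatibility relation'' would require the orbit of $(q,q')$ under the strategy to stay inside $l\times l$, which is exactly what does not hold.

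The paper's proof replaces your subset order by the right well-founded measure: the game rank $\mathit{level}(q,q')$ computed by \algorithmref{blockingalg} on $S\|S$ (via \lemmaref{compatibility reduces to validity}). One picks a leaf $l$ containing an incompatible pair $(q,q')$ whose $\mathit{level}$ value is minimal over \emph{all} incompatible pairs occurring in \emph{any} leaf. By \lemmaref{lemma move and level}, the tester's first move (either an input $a$ with $\mathit{move}(q,q')=a$, or $\theta$ followed by any enabled output) strictly decreases the level of the resulting incompatible pair $(r,r')$. Minimality of $\mathit{level}(q,q')$ then forces: no leaf of $Y$ contains $\{r,r'\}$, hence $\after{l}{\mu}\not\subseteq l''$ for every leaf $l''$, hence $LCA(Y,\after{l}{\mu})\neq\emptyset$. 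This is the missing idea in your plan: minimality must be taken with respect to the game rank of pairs, not the inclusion order of leaves.
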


\begin{proof}
Let $Y$ be an incomplete splitting graph for suspension automaton $S = (Q, T, q_0)$.
Since $Y$ is incomplete, there is at least one leaf that contains a pair of incompatible states.
By \lemmaref{compatibility reduces to validity}, we have that for all states $q, q'$ of $S$,
$q \not\compatible q'$ iff $(q, q')$ is an invalid state of $S \| S$.	
Using Algorithm~\ref{blockingalg}, we may therefore compute the pairs of incompatible states of $S$ and functions $\mathit{move}$ and $\mathit{level}$ on these pairs.
Let $l$ be the leaf node that contains a pair of incompatible states $q, q'$ for which the value $\mathit{level}(q,q')$ is minimal.
We claim that $l$ is a splittable leaf of $Y$. There are three cases:
\begin{enumerate}
\item 
Suppose $\mathit{level}(q,q') = 0$.
Then, by \lemmaref{lemma move and level}(1), $(q,q')$ is a blocking state of $S \| S$.
This implies that $\outm{q} \cap \outm{q'} = \emptyset$.
But this means that, for each output action $x$, either $x \not\in \outm{q}$ or $x \not\in \outm{q'}$. Therefore,
$l$ can be split on output.
\item
Suppose $\mathit{level}(q, q') > 0$ and $\mathit{move}(q, q')=a \in I$.
Then, by \lemmaref{compatibility reduces to validity} and \lemmaref{lemma move and level}(2),
both $q$ and $q'$ enable input $a$ and, writing $r= T(q,a)$ and $r' = T(q',a)$, we have $r \not\compatible r'$,
$\{ r, r' \} \subseteq \after{l}{a}$, and $\mathit{level}(r,r')) < \mathit{level}(q,q')$.
Since none of the leaves contains a pair of incompatible states with a $\mathit{level}$ value smaller than $(q,q')$,
we know that $Y$ does not have a leaf node that contains both $r$ and $r'$.
But this implies that $LCA(Y,\after{l}{a}) \neq \emptyset$, and so $l$ can be split on input.
\item
Suppose $\mathit{level}(q,q') > 0$ and $\mathit{move}(q,q') = \theta$.
Let $x \in \outm{l}$.
If there exists an $s \in l$ such that $x \not\in\outm{s}$ then we may split on output.
Otherwise, both $q$ and $q'$ enable output $x$. Write $r= T(q,x)$ and $r' = T(q',x)$.
Then $\{ r, r' \} \subseteq \after{l}{x}$ and $r \not\compatible r'$.
By \lemmaref{compatibility reduces to validity} and \lemmaref{lemma move and level}(2),
$\mathit{level}(r,r') < \mathit{level}(q,q')$.
Since none of the leaves contains a pair of incompatible states with a $\mathit{level}$ value smaller than $(q,q')$,
we know that $Y$ does not have a leaf node containing both $r$ and $r'$.
But this implies that $LCA(Y,\after{l}{x}) \neq \emptyset$, so $l$ can be split on output. \qed
\end{enumerate}
\end{proof}

\subsection{Splitting Graph Construction}

Based on the condition of \definitionref{def:splitconditions} that holds, we assign children to splittable leaf nodes, and update 
the witness function.
This is worked out in the method {\tt splitnode} of \algorithmref{splitalg}.
The algorithm may choose nondeterministically between a split on output or a split on input, if both are possible. Such a choice is denoted with the syntax for guarded commands \cite{guardedcommands}, i.e., as the guards on lines 5 and 16, and their respective statements on lines 6-15, and 17-20.

If a leaf $l$ is split on output, then children are added for each output $x \in \out{l}$. 
If $\enabled{l}{x} \neq l$, then we add $\enabled{l}{x}$ as a child, as those are the only states from which $x$ can be observed.
We also add (i.e., by using +) the term $x.\Nil$ to the witness of $l$, as observing $x$ distinguishes states in $\enabled{l}{x}$ from states in $l\setminus\enabled{l}{x}$.

If $\enabled{l}{x} = l$, observing $x$ will not distinguish any states. We then use that there is a $v \in LCA(Y,\after{l}{x})$, which means that some states of $\after{l}{x}$ are distinguished by the witness $W(v)$. Hence, by taking output $x$, followed by $W(v)$, some states of $l$ are distinguished.
Therefore, we add $x.W(v)$ to the witness of $l$, and split $l$ in the same way $v$ was split, i.e., if $d \subseteq l$ are all the states with $\after{d}{x} \subseteq c$ for some child $c \in \post(v)$, then $d$ is a child of $l$. We call such a split an \emph{induced split}.

For splitting on some input $a$, we also use an induced split to obtain the children for $l$. Since there exists some
$v \in LCA(Y,\after{l}{a})$,  at least two states of $l$ may be distinguished by the witness constructed for $v$, after taking input $a$.
To each element of the induced split, we add all the states not enabling $a$. If we would not do this, \algorithmref{splitalg} may assign the empty set as children to a splittable leaf, such that it remains a leaf. As a consequence, \lemmaref{lem:splitgraphfromalg} and also \corollaryref{cor:completesplitgraph} then do not hold.
\iflong
This will be illustrated by \exampleref{exmp:counterexampleinputsplit}.
\fi
\corollaryref{cor:completesplitgraph} shows termination of our splitting graph construction algorithm. It follows from the consecutive application of \lemmaref{lem:splitgraphfromalg}.

\begin{definition} \label{def:inducedsplit}
 Let $Y$ be a splitting graph for suspension automaton $S$. 
 Let $v$ be an internal node of $Y$, $P$ a set of states of $S$, and $\mu \in L$, such that $\after{P}{\mu} \subseteq v$. 
 Then the \emph{induced split} of $P$ with $\mu$ to $v$ is:
 \begin{eqnarray*}
  \Pi(P,\mu,v) & = & \{ (\before{c}{\mu}) \cap P  \mid c \in \post_{Y}(v)\} \setminus \emptyset.
 \end{eqnarray*}
\end{definition}

\begin{algorithm}[!ht]
 \caption{Splitting a leaf node of a splitting graph.}
  \label{splitalg}
  \SetKwInOut{Input}{Input}
  \Input{A suspension automaton $S = (Q, T, q_0)$}
  \Input{An incomplete splitting graph $Y=(V,E,W)$ for $S$}
      \SetKwInOut{Output}{Output}
  \Output{A splitting graph $Y'$ that extends $Y$ with additional leaf nodes}
  \SetKwFunction{func}{splitnode}
  \SetKwProg{myalg}{Function}{:}{}
  \SetKwProg{FirstGuard}{if}{ $\rightarrow$}{}
  \SetKwProg{LastGuard}{[]}{ $\rightarrow$}{fi} 
  \myalg{\func$(S,Y)$}{
  $l :=$ a splittable leaf of $Y$\; 
  $C := \emptyset$\;
  $F := \Nil$\;
  \FirstGuard{$l$ splittable on output}
    {
        \ForEach{$x \in \outm{l}$}{
        \eIf{$\exists q \in l: x \not\in\out{q}$}
        {
            $C := C \cup \{ \enabled{l}{x} \}$\;
            $F := F + x.\Nil$\;
        }
        {
            Let $v \in LCA(Y,\after{l}{x})$\;
            $C := C \cup \Pi(l,x,v)$\;
            $F := F + x.W(v)$\;
        }
        }
      }
      \LastGuard{$l$ splittable on input}{ 
        Let $a \in \inp{l}$ with $LCA(Y,\after{l}{a}) \neq \emptyset$\;
      	Let $v \in LCA(Y,\after{l}{a})$\; 
      	$C := \{d \cup (l \setminus \enabled{l}{a}) \mid d \in \Pi(l,a,v) \}$\;
      	$F := a.W(v)$\;
      }
    \Return{$(V \cup C,E \cup \{(l,c) \mid c \in C\}, W \cup \{l \mapsto F \})$}\;
    }
 \end{algorithm}
 
  \begin{example} \label{exmp:splitgraphconstr}
We compute the splitting graph of the suspension automaton from \figureref{fig:spec}, using \algorithmref{alg:splitgraphconstruction}, and show the result in \figureref{fig:splitdistgraph}(left).

For the root node $\{1,2,3,4\}$, we observe that state 4 does not enable $x$, while states 2 and 3 do not enable $y$. Hence, the root is split on output, gets children $\{1,2,3\}$ and $\{1,4\}$, and witness $x.\Nil + y.\Nil$.

Node $\{1,2,3\}$ can be split on input $a$, as states 1 and 2 enable $a$, and since the root node is an LCA of $\after{\{1,2,3\}}{a}$: from $T(1,a) = 3$ and $T(2,a) = 4$, we obtain that the root node is an LCA, since $\{3,4\} \subseteq \{1,2,3,4\}$, but  $\{3, 4\} \not\subseteq \{1,2,3\}$, and $\{3, 4\} \not\subseteq \{1,4\}$.
The induced split is $\{\{1\},\{2\}\}$. We then need to add state 3 to both sets, because state 3 does not enable $a$, so node \{1,2,3\} gets children \{1,3\} and \{2,3\}.
Prepending $a$ to the witness of the root node gives us witness $a.(x.\Nil + y.\Nil)$ for $\{1,2,3\}$.

Node $\{1,4\}$ can be split on output. As state 4 does not enable $x$, we only need to find an LCA for $\after{\{1,4\}}{y} = \{1,2\}$, which is the previously split node $\{1,2,3\}$. For $x$ we have witness $x.\Nil$, and for $y$ we use the witness of $\{1,2,3\}$, so the witness for \{1,4\} is 
$x.\Nil + y.a.(x.\Nil + y.\Nil)$. 

Next, node \{1,3\} can be split on output using \{1,4\} as LCA for $x$.
Node \{2,3\} does not need to be split, as we have $2 \compatible 3$.
All other leaves are singletons, so we have obtained a complete splitting graph.
\end{example}
 
\begin{figure}[ht!]
\quad

\hspace{-7mm}
	\begin{tikzpicture}[level/.style={sibling distance=35mm/#1, level distance=15mm}]
	\tikzstyle{every node}=[align=center,text width=2.3cm]
	\node [align=center] {\{1,2,3,4\}\\\smallerfont $x.\Nil + y.\Nil$}
	child {node [align=center] {\{1,2,3\}\\\smallerfont $a.(x.\Nil + y.\Nil)$}
		child {node {\{2,3\}}
			edge from parent node {}
		}
		child {node {\{1,3\}\\\smallerfont $x.(x.\Nil + y.a.(x.\Nil + y.\Nil)) + y.\Nil$}
            child {node {\{3\}}
                edge from parent node {}
            }
            child {node (1) {\{1\}}
                edge from parent node {}
            }
			edge from parent node {}
		}
		edge from parent node {}
	}
	child {node (14) [align=center] {\{1,4\}\\\smallerfont $x.\Nil + y.a.(x.\Nil + y.\Nil)$}
		child {node {\{4\}}
			edge from parent node {}
		}
	};
	\path[-]
	(14) edge (1)
	;
	\end{tikzpicture}
	\hspace{-5mm}
	\begin{tikzpicture}[level/.style={sibling distance=35mm/#1, level distance=18mm},>=stealth']
	\tikzstyle{every node}=[align=center,text width=2.3cm]
	\node [align=center] {\{1,2,3,4\}\\\smallerfont $x.(x.\Nil + y.a.(x.\Nil + y.\Nil)) + y.a.(x.\Nil+y.\Nil)$}
	child {node (14) [align=center] {\{1,4\}\\\smallerfont $x.\Nil + y.a.(x.\Nil + y.\Nil)$}
		child {node (1) {\{1\}\\{\smallerfont$\Nil$}\{4\}\\\{2\}}
			edge from parent [->] node [left=-11mm] {$x$}
		}
		edge from parent [->] node [left=-10mm] {$x$}
	}
	child {node (12) [align=center] {\{1,2\}\\\smallerfont $a.(x.\Nil+y.\Nil)$}
		child {node (34) {\{3,4\}\\\smallerfont $x.\Nil+y.\Nil$}
			edge from parent [->] node [right=-11mm] {$a$}
		}
		edge from parent [->] node [right=-10mm] {$y$}
	};
	\path[->]
	($(14)+(0.8,0)$) edge node [above] {$y$} ($(12)+(-0.5,0)$)
	($(34)+(-0.5,0)$) edge node [above] {$x$} ($(1)+(0.4,0)$)
	(34) edge [bend left] node [above] {$y$} ($(1)+(0,-0.6)$)
	;
	\end{tikzpicture}
	\caption{Splitting graph for the suspension automaton of \figureref{fig:spec} (left), where we shorten all CCS expresions $\Nil + F$ to $F$, and 
	an adaptive distinguishing graph for the suspension automaton of \figureref{fig:spec} (right), annotated with current state sets $P$, as used in \algorithmref{alg:distgraph}.}
	\label{fig:splitdistgraph}
\end{figure}

\iflong
 \begin{figure}
	\begin{tikzpicture}[shorten >=1pt,node distance=1.3cm,>=stealth']
	\tikzstyle{every state}=[draw=black,text=black,inner sep=1pt,minimum
	size=10pt,initial text=]
	\node[state,initial,initial where=above] (1) {1};
    \node[state] (2) [below of=1] {2};
	\node[state] (3) [right of=1] {3};
	\node[state] (4) [right of=2] {4};
	\node[state] (5) [right of=3] {5};
	\node[state] (6) [right of=4] {6};
	\node[state] (7) [right of=5] {7};
	\node[state] (8) [right of=6] {8};
	\path[->]
	(1) edge [loop left] node {$x$} (1)
	(1) edge node [above] {$a$} (3)
	(2) edge [loop left] node {$x$} (2)
	(2) edge node [above] {$a$} (4)
	(3) edge node [left] {$z$} (4)
	(4) edge node [above] {$z$} (6)
	(5) edge node [above] {$z$} (3)
	(5) edge node [above] {$a$} (7)
	(6) edge node [left] {$z$} (5)
	(6) edge node [above] {$a$} (8)
	(7) edge node [left] {$x$} (8)
	(8) edge [loop right] node {$y$} (8)
	;
	\end{tikzpicture}
	\begin{tikzpicture}[level/.style={sibling distance=20mm/#1, level distance=16mm}]
	\tikzstyle{every node}=[align=center,text width=1.3cm]
	\node [align=center] {\{1,2,3,4,5\}\\\smallerfont $x.\Nil + y.\Nil+ z.\Nil$}
	child {node [align=center] {\{1,2,7\}\\\smallerfont $x.(x.\Nil + y.\Nil+z.\Nil)$}
		child {node {\{7\}}
			edge from parent node {}
		}
		child {node {\{1,2\}\\\smallerfont$a.a.(x.\Nil + y.\Nil+z.\Nil)$}
		}
		edge from parent node {}
	}
	child {node {\{8\}}}
	child {node [align=center] {\{3,4,5,6\}\\\smallerfont $a.(x.\Nil + y.\Nil+z.\Nil)$}
		child {node {\{5\}}
			edge from parent node {}
		}
		child {node {\{6\}}
			edge from parent node {}
		}
	};
	\end{tikzpicture}
	\caption{A suspension automaton (left), and its incomplete splitting graph (right), when replacing line 19 of \algorithmref{splitalg} by $C := \Pi(l,a,v)$;}
	\label{fig:counterexampleinputsplit}
\end{figure}

\begin{example} \label{exmp:counterexampleinputsplit}
 \figureref{fig:counterexampleinputsplit} shows that using only the induced split as children, for splitting a leaf on input, results in an incomplete splitting graph. The construction of the splitting graph goes as follows. The root node \{1,2,3,4,5\} can be split on output, as each state only enables one of the three outputs $x$, $y$, and $z$: we obtain children \{\{1,2,7\},\{8\},\{3,4,5,6\}\}. Leaf \{1,2,7\} can be split on output, as $\after{\{1,2,7\}}{x} = \{1,2,8\}$ shows that we can use the root node as LCA.
 Leaf \{3,4,5,6\} cannot be split on output as $\after{\{3,4,5,6\}}{z} = \{3,4,5,6\}$, so there exists no LCA for $\after{\{3,4,5,6\}}{z}$.
 It can be split on input $a$: $\after{\{3,4,5,6\}}{a} = \{7,8\}$, so we can use the root node as LCA. Then $\Pi(\{3,4,5,6\},a,\{1,2,3,4,5\}) = \{\{5\},\{6\}\}$, so these are added as children.
 It remains to split \{1,2\}, as they are incompatible: a test case with observations $\{azzax,azzay\}$ distinguishes 1 and 2.
 Leaf \{1,2\} cannot be split on input, as $\after{\{1,2\}}{x} = \{1,2\}$, so no LCA exists. For input $a$ we find that $\after{\{1,2\}}{a}$ = \{3,4\}, and \{3,4,5,6\} is an LCA.
 However, $\Pi(\{1,2\},a,\{3,4,5,6\}) = \emptyset$, as both 3 and 4 are not contained in any child of \{3,4,5,6\}.
 Hence, we obtain $\post(\{1,2\}) = \emptyset$, which means by definition that \{1,2\} is a leaf.
 \algorithmref{alg:splitgraphconstruction} will keep trying to split \{1,2\} indefinitely, and will hence not terminate.
\end{example}
\fi

\begin{lemma} \label{lem:splitgraphfromalg}
	Algorithm~\ref{splitalg} returns a splitting graph $Y'$ for $S$, when given some splitting graph $Y$, such that one leaf $l$ of $Y$, has become an internal node in $Y'$.
\end{lemma}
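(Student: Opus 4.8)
The plan is to verify directly that the triple $Y' = (V \cup C,\, E \cup \{(l,c)\mid c\in C\},\, W \cup \{l\mapsto F\})$ produced by \algorithmref{splitalg} satisfies every clause of \definitionref{def:splitgraph}, and that $l$ gains at least one child so that it becomes internal in $Y'$. First I would note that \lemmaref{lemma splittable leaf} guarantees that line~2 succeeds, so a splittable leaf $l$ is actually chosen. Then I would dispose of the clauses that are essentially inherited from $Y$: the only structural changes are the new nodes in $C$ and the new edges leaving $l$, and since every $c\in C$ is a subset of $l\subseteq Q$, we still have $Q\in V'\subseteq\mathcal{P}(Q)\setminus\emptyset$ (once the new children are shown non-empty); $Q$ stays the unique root, because $l\neq Q$ already had a predecessor and each $c\in C$ now has predecessor $l$ (and $c\neq Q$ since $c\subsetneq l$); and the strict-descent clause $(v,w)\in E'\Rightarrow v\supset w$ carries over once the new edges are shown to satisfy it, which in turn makes $Y'$ a DAG. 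For every internal $v\neq l$ we have $\post_{Y'}(v)=\post_Y(v)$ and $W'(v)=W(v)$, so the union-of-children clause and the witness clause for those nodes are untouched.

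The crux is the new node $l$. I would establish, separately for the output-split branch (lines~6--15) and the input-split branch (lines~17--20), that $C$ satisfies: (i) $C\neq\emptyset$; (ii) $\bigcup C = l$; (iii) $\emptyset\neq c\subsetneq l$ for every $c\in C$. The key tool is that, since $Y$ is a splitting graph, every internal node $v$ of $Y$ satisfies $v=\bigcup\post_Y(v)$; hence whenever $v\in LCA(Y,\after{l}{\mu})$, every state of $\after{l}{\mu}$ lies in some child of $v$, so for each $q\in\enabled{l}{\mu}$ the state $T(q,\mu)$ lies in some $c'\in\post_Y(v)$ and therefore $q\in(\before{c'}{\mu})\cap l$. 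This shows the induced split $\Pi(l,\mu,v)$ is non-empty and that its blocks cover $\enabled{l}{\mu}$. Adding $\enabled{l}{x}$ alongside the blocks of $\Pi(l,x,v)$ in the output case, and adding $l\setminus\enabled{l}{a}$ to every block in the input case, then makes the children cover all of $l$ — this is precisely the repair whose necessity is exhibited by \exampleref{exmp:counterexampleinputsplit}. Strictness $c\subsetneq l$ follows from the defining property of $LCA$: if some block equalled $l$ (resp.\ $\enabled{l}{a}$), then $\after{l}{\mu}$ would be contained in a child $c'$ of $v$, contradicting $v\in LCA(Y,\after{l}{\mu})$. Finally $C\neq\emptyset$ holds because in the output branch $\out{l}\neq\emptyset$ ($S$ is a suspension automaton, $l\neq\emptyset$), and in the input branch $\enabled{l}{a}\neq\emptyset$ (as $a\in\inp{l}$).

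It remains to check the witness clause for $l$: that $A_F$ is a test case and that for all $\sigma\in\Obs{A_{F}}$ there is $c\in\post_{Y'}(l)$ with $\enabled{c}{\sigma}=\enabled{l}{\sigma}$. That $A_F$ is a deterministic, acyclic CCS expression whose states are input states, output states, or leaves is routine: in the output branch $F$ is a sum of terms $x.\Nil$ and $x.W(v)$ over pairwise distinct outputs $x$, each $A_{W(v)}$ already being a test case; in the input branch $F=a.W(v)$. For the observation property, take $\sigma\in\Obs{A_F}$. In the output branch $\sigma=x\tau$ with $x\in\out{l}$: if the $x$-summand is $x.\Nil$ then $\tau=\epsilon$ and $c=\enabled{l}{x}$ gives $\enabled{c}{x}=\enabled{l}{x}$; if it is $x.W(v)$ then $\enabled{l}{x}=l$, $\tau\in\Obs{A_{W(v)}}$, and by the witness property of $v$ there is $c'\in\post_Y(v)$ with $\enabled{c'}{\tau}=\enabled{v}{\tau}$, so taking $c=(\before{c'}{x})\cap l\in\Pi(l,x,v)$ and using $\after{l}{x}\subseteq v$ one checks state by state that $q\in\enabled{l}{x\tau}$ iff $q\in l$, $T(q,x)\downarrow$ and $T(q,x)\in\enabled{v}{\tau}=\enabled{c'}{\tau}$, which is exactly $q\in\enabled{c}{x\tau}$. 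The input branch is analogous with $\sigma=a\tau$ and $c=d\cup(l\setminus\enabled{l}{a})$ for the relevant block $d\in\Pi(l,a,v)$. Since $C=\post_{Y'}(l)\neq\emptyset$, $l$ is internal in $Y'$, and $Y'$ is a splitting graph for $S$.

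I expect the genuinely delicate step to be this last one: the bookkeeping that, for each observation $\sigma$ of the freshly built witness $F$, identifies which block of the induced split plays the role of the required child $c$, and the equational manipulation establishing $\enabled{c}{\sigma}=\enabled{l}{\sigma}$. The other point requiring care is the non-emptiness and covering property of $\Pi$, which rests on the invariant $v=\bigcup\post(v)$ for internal nodes of $Y$ and is precisely where the $l\setminus\enabled{l}{a}$ correction on line~19 is needed.
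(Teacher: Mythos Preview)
Your proposal follows the same route as the paper: verify each clause of \definitionref{def:splitgraph} for the returned triple, organised around strict inclusion of the new children, coverage $l=\bigcup C$, $C\neq\emptyset$, $A_F$ being a test case, and the witness/observation property, with the invariant $v=\bigcup\post_Y(v)$ for LCA nodes doing the work throughout.

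There is one small technical gap in your witness-clause argument. When $\sigma=x\tau$ (or $a\tau$) and you pick $c'\in\post_Y(v)$ with $\enabled{c'}{\tau}=\enabled{v}{\tau}$, you set $c=(\before{c'}{x})\cap l$ and assert $c\in\Pi(l,x,v)$. But $c$ may be empty: $v$ is a least common ancestor of $\after{l}{x}$, not a minimal cover, so a particular child $c'$ of $v$ need not meet $\after{l}{x}$ at all, and then $(\before{c'}{x})\cap l=\emptyset\notin\Pi(l,x,v)$ is not a child of $l$. The paper treats this case explicitly: if $(\after{l}{x})\cap c'=\emptyset$ then your own iff chain yields $\enabled{l}{x\tau}=\emptyset$, whence any actual child $d\in C$ (which exists since you have already shown $C\neq\emptyset$) satisfies $\enabled{d}{x\tau}\subseteq\enabled{l}{x\tau}=\emptyset$ and serves as the required witness. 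The same caveat applies verbatim to the input branch. With this degenerate case added, your argument is complete and matches the paper's.
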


\begin{proof}
	The input of Algorithm~\ref{splitalg} is a splitting graph $Y$ for $S$.
	All the algorithm does is to take a single leaf node $l$, add
	children $C$ to it, and extend the evaluation function $W$ for some witness $A$ to $l$.
	This means that it in order to prove that Algorithm~\ref{splitalg} returns a splitting graph, it suffices
	to show that 
	(a) for all $d \in C$, $\emptyset \subset d \subset l$, 
	(b) $l = \bigcup C$,
	(c) $A$ is a test case, and
	(d) $\forall \sigma \in \Obs{A}, \exists c \in C: \enabled{c}{\sigma} = \enabled{l}{\sigma}$, and
	(e) $C \neq \emptyset$.
	
	To prove (a) we inspect the three places in the algorithm where a new element $d$ was added to the set $C$ of children of $l$: line 8, line 12 and line 18:
	\begin{itemize}
		\item 
		Line 8: In this case $x \in\outm{l}$ and there exists a $q \in l$ such that output $x$ is not enabled from state $q$. This implies $\emptyset \subset d =  \enabled{l}{x} \subset l$, as required.
		\item 
		Line 12: In this case, let $d \in \Pi(l, x, v)$ for some
		$v \in LCA(Y,\after{l}{x})$.
		By definition of $\Pi$, $\emptyset \subset d$ and there is a
		$c \in \post_{Y}(v)$ such that $d = (\before{c}{x}) \cap l$.
		Note that this implies $d \subseteq l$.
		By definition of LCA, there exists a $q \in \after{l}{x}$ with
		$q \not\in c$.
		Because $q \in \after{l}{x}$, there exists a state $r \in l$ such that
		$T(r,x)=q$. Since $q \not\in c$, we know that $r \not\in \before{c}{x}$.
		Hence $\emptyset \subset d \subset l$, as required.
		\item 
		Line 18: In this case, $d = e \cup (l \setminus \enabled{l}{a})$, where
		$e \in \Pi(l, a, v)$ and $v \in LCA(Y,\after{l}{a})$.
		By definition of $\Pi$, $\emptyset \neq e$ and there is a
		$c \in \post_{Y}(v)$ such that $e = (\before{c}{a}) \cap l$.
		This implies $\emptyset \subset d \subseteq l$.
		By definition of LCA, there exist $q \in \after{l}{a}$ such that
		$q \not\in c$.
		Because $q \in \after{l}{a}$, there exists a state $r \in l$ such that
		$T(r,a)=q$. Since $q \not\in c$, we know that $r \not\in \before{c}{a}$.
		This means $r \not\in e$ and thus $r \not\in d$.
		Hence $\emptyset \subset d \subset l$, as required.
	\end{itemize}

	For proving (b), it remains to show that $l \subseteq  \bigcup C$.  Choose $q \in l$. We consider
	two cases:
	\begin{itemize}
		\item 
		A split on output was performed (line 5-15). Since $S$ is a suspension automaton,
		there is at least one output $x$ that is enabled in $q$. If there is another
		state in $l$ that does not enable $x$ then $\enabled{l}{x}$ is added to $C$
		and thus $q \in \bigcup C$, as required.
		Otherwise, sets $(\before{c}{x}) \cap l$ are added to $C$, for $c \in \post_{Y}(v)$ and some $v \in LCA(Y,\after{l}{x})$.
		Let $r = T(q,x)$. Since $\after{l}{x} \subset v$ and $v = \bigcup \post_{Y}(v)$,
		there is some $c \in \post_{Y}(v)$ with $r \in c$.
		This implies $q \in (\before{c}{x}) \cap l$ and therefore
		$q \in \bigcup C$, as required.
		\item 
		A split on input was performed (lines 16-20). In this case, the sets
		$e \cup (l \setminus \enabled{l}{a})$ are added to $C$, for $e \in \Pi(l,a,v)$,
		some input $a$ and $v \in LCA(Y,\after{l}{a})$.
		If state $q$ does not enable input $a$ then state $q$ is in each set that is
		added to $C$, and thus $q \in \bigcup C$, as required.
		Now suppose $q$ enables input $a$. Let $r = T(q,a)$.
		Then $r \in \after{l}{a}$ and thus $r \in v$. Since $v = \bigcup \post_{Y}(v)$,
		there is some $c \in \post_{Y}(v)$ with $r \in c$.
		Therefore, $q \in \before{c}{x}) \cap l \in \Pi(l,a,v)$, and therefore
		$q \in \bigcup C$, as required.	
	\end{itemize}

	For proving (c) we again consider the two cases of splitting on output or input:
	\begin{itemize}
	 \item If a split on output was performed, then root of $A$ is an output state, as each observation has an output prefix: on line 9 or 13 either $x.\Nil$ or $x.W(v)$ for some $v \in LCA(Y,\after{l}{x})$ are added to $A$. Since $\Nil$ is a test case, and $A_{W(v)}$ is a test case since $v$ is an internal node of $Y$, $A$ is also a test case.
	 \item If a split on input was performed, then the root of $A$ is an input state, as it enables a single input according to line 20: $A = A_{a.W(v)}$ for some $v \in LCA(Y,\after{l}{x})$. As $A_{W(v)}$ is a test case since $v$ is an internal node of $Y$, $A$ is also a test case. 
	\end{itemize}
	
	For proving (d), we inspect the three places in the algorithm where children were added to $C$, and where witness observations were added to $A$.
	We will show that for each added observation $\sigma$, a child $d$ constructed at the same place can be used to prove $\enabled{d}{\sigma} = \enabled{l}{\sigma}$.
	\begin{itemize}
	 \item On lines 8 and 9, a child $d = \enabled{l}{x}$ was added to $C$, and observation $x$ was added to $A$. 
	 Hence, for $x \in \Obs{A}$ we have child $d$ with $\enabled{d}{x} = \enabled{l}{x}$.
	 \item On lines 12 and 13, children $d \in \Pi(l,x,v)$ are added to $C$, and observations $x\sigma$ are added to $A$ for all $\sigma \in \Obs{A_{W(v)}}$, using some $v \in LCA(Y,\after{l}{x})$.
	 Since $v$ is an internal node of $Y$, there is a $c \in \post(v)$ such that $\enabled{c}{\sigma} = \enabled{v}{\sigma}$.
	 If $\before{c}{x} \cap l = \emptyset$, then it holds that $(\after{l}{x}) \cap c = \emptyset$, so from $\after{l}{x} \subseteq v$ (by $v \in LCA(Y,\after{l}{x}$) it then follows that $\enabled{\after{l}{x}}{\sigma} = \enabled{l}{x\sigma} = \emptyset$. Hence any $d \in \Pi(l.x.v)$ can be used to show $\enabled{d}{x\sigma} = \enabled{l}{x\sigma}$ as $d \subseteq l$.
	 Else, there is some $d \in \Pi(l,x,v)$ with $d = (\before{c}{x}) \cap l$.
	 Let $e = c\setminus (\after{d}{x})$, and observe that $e \cap (\after{l}{x}) = \emptyset$.
	 From $\enabled{c}{\sigma} = \enabled{v}{\sigma}$ and $\after{l}{x} \subseteq v$ it then follows that $\enabled{(\after{d}{x}) \cup e}{\sigma} = \enabled{(\after{l}{x}) \cup (v \setminus (\after{l}{x}))}{\sigma}$, so $\enabled{\after{d}{x}}{\sigma} = \enabled{\after{l}{x}}{\sigma}$.
	 It follows that $\enabled{d}{x\sigma} = \enabled{l}{x\sigma}$.
	 \item On lines 19 and 20 children $d\cup(l\setminus\enabled{l}{a})$ for all $d \in \Pi(l,a,v)$ are assigned to $C$, and observations $a\sigma$ are added to $A$ for all $\sigma \in \Obs{A_{W(v)}}$, using some $v \in LCA(Y,\after{l}{a})$. Again, since $v$ is an internal node of $Y$, there is a $c \in \post(v)$ such that $\enabled{c}{\sigma} = \enabled{v}{\sigma}$.
	 If $\after{l}{a} \cap c = \emptyset$, then it follows, with similar arguments as for lines 12 and 13, that $\enabled{\after{l}{a}}{\sigma} = \enabled{l}{a\sigma} = \emptyset$. 
	 Since $\enabled{l\setminus\enabled{l}{a}}{a} = \emptyset$, and hence also $\enabled{l\setminus\enabled{l}{a}}{a\sigma} = \emptyset$, we can use any child $e$ from line 12 to show $\enabled{e}{a\sigma} = \enabled{l}{a\sigma}$.
	 Else, there is some $d \in \Pi(l,a,v)$ with $d = (\before{c}{a}) \cap l$. With the same reasoning as for lines 12 and 13, we obtain $\enabled{d}{a\sigma} = \enabled{l}{a\sigma}$.
	 By again using that $\enabled{l\setminus\enabled{l}{a}}{a\sigma} = \emptyset$, we obtain $\enabled{d \cup (l\setminus\enabled{l}{a})}{a\sigma} = \enabled{l}{a\sigma}$.
	\end{itemize}
	
	For proving (e) we consider the two cases of splitting on output or input:
	\begin{itemize}
	 \item Suppose an output split is performed. The body of the for-loop on lines 7-14 is then executed at least once, since the algorithm only accepts suspension automata, so each state is non-blocking, and consequently $|\out{l}| \ge 1$.
	 Hence, suppose that the for-loop is executed for some $x \in \out{l}$.
     To prove that $C \neq \emptyset$, we now need to show that $\{\enabled{l}{x}\} \neq \emptyset$ (line 8), and that $\Pi(l,x,v) \neq \emptyset$ (line 12), using some $v \in LCA(Y,\after{l}{x}$ (line 11).
     
     For line 8, we use from (a) that $\emptyset \subset \enabled{l}{x}$, so $\{\enabled{l}{x}\} \neq \emptyset$.
     
     For line 12, we need to prove that there exists a $c \in \post(v)$ such that $(\before{c}{x}) \cap l \neq \emptyset$.
     Because there is some $v \in LCA(Y,\after{l}{x})$, we have $\after{l}{x} \subseteq v$. Since $x \in \out{l}$, there is a $q \in \after{l}{x}$, so $\emptyset \subset \before{q}{x} \subseteq l$. Because $v = \bigcup\post(v)$, there is a $c \in \post(v)$ with $q \in c$. Hence, $\before{q}{x} \subseteq \before{c}{x}$. It then follows that $(\before{c}{x}) \cap l \neq \emptyset$. 
     \item Suppose an input split is performed for some input $a$. We then have a $c \in \post(v)$ with $(\before{c}{a}) \cap l \neq \emptyset$, for the same reasons as given for line 12. Consequently, $\Pi(l,a,v) \neq \emptyset$. Adding the (possibly empty) set $l\setminus\enabled{l}{a}$ to each element of $\Pi(l,a,v)$ results in a non-empty set $C$. \qed
	\end{itemize}

\end{proof}

\begin{corollary} \label{cor:completesplitgraph}
 \algorithmref{alg:splitgraphconstruction} returns a complete splitting graph for $S$.
\end{corollary}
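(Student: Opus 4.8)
The plan is to derive Corollary~\ref{cor:completesplitgraph} as a straightforward consequence of the results already established, by tracking what \algorithmref{alg:splitgraphconstruction} does across iterations of its while-loop. First I would observe that the initial value $Y := (\{Q\},\emptyset,\bot)$ is trivially a splitting graph for $S$: the single node $Q$ is the root, there are no edges (so conditions 2 and 3 on $E$ are vacuous), and $\nonleaves{Y} = \emptyset$, so the witness function $\bot$ satisfies its requirement vacuously. This is the base case of an induction on the number of while-loop iterations.

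For the inductive step, I would invoke \lemmaref{lem:splitgraphfromalg}: if $Y$ is a splitting graph for $S$ and the loop guard holds (i.e., some leaf $l$ has $\neg\Diamond(l)$, so $Y$ is incomplete), then \algorithmref{splitalg} is applicable and returns a splitting graph $Y'$ for $S$ in which $l$ has become internal. Hence the loop invariant ``$Y$ is a splitting graph for $S$'' is maintained, and in particular the returned value is a splitting graph for $S$. Completeness of the returned graph then follows from the exit condition of the while-loop: the loop terminates only when $\neg(\exists l \in \leaves{Y} : \neg\Diamond(l))$, i.e., when $\Diamond(l)$ holds for every leaf $l$, which is exactly the definition of a complete splitting graph in \definitionref{def:splitgraph}.

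The one genuinely non-trivial point — and the step I expect to be the main obstacle, though it is still short — is termination of the while-loop. Here I would argue via a decreasing measure. By \lemmaref{lem:splitgraphfromalg}, each iteration turns one leaf $l$ into an internal node whose children are strict, non-empty subsets of $l$; and by condition~2 on $E$ in \definitionref{def:splitgraph}, all descendants of any node are strict subsets of it. Since $Q$ is finite, there is a bound (e.g., $2^{|Q|}$) on the number of distinct subsets of $Q$ that can appear as nodes, and more carefully one can bound the number of internal nodes: each node is a set whose size is at most $|Q|$, and along any chain of edges the set size strictly decreases, while the number of \emph{leaves} that need splitting is finite and each split replaces one such leaf by leaves of strictly smaller size. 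A clean way to phrase the measure is $\sum_{l \in \leaves{Y}} (|l| - 1)$, or the multiset of leaf sizes ordered by the multiset extension of $<$ on $\mathbb{N}$: each split removes a leaf of size $k \geq 2$ and adds finitely many leaves each of size $< k$, so this measure strictly decreases, guaranteeing termination. Combined with \lemmaref{lemma splittable leaf} — which ensures that whenever the loop guard holds there is indeed a splittable leaf, so \algorithmref{splitalg} does not get stuck — this establishes that the algorithm terminates and, by the invariant and exit condition above, returns a complete splitting graph for $S$.
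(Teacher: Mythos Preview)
Your argument is correct and matches the paper's own (terse) justification, which simply says the corollary ``follows from the consecutive application of \lemmaref{lem:splitgraphfromalg}'' together with the earlier remark that termination holds because children are strict subsets of their parents. One small caveat: the measure $\sum_{l \in \leaves{Y}} (|l|-1)$ need not strictly decrease (e.g.\ splitting $\{1,2,3,4\}$ into $\{1,2,3\}$ and $\{1,4\}$ keeps the sum at $3$, and splits with many overlapping children can increase it), so rely on your multiset-ordering argument or the observation that the number of internal nodes increases by one each iteration and is bounded by $2^{|Q|}$.
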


The algorithm of \cite{leeyannakakis} constructs a splitting tree in polynomial time, because leaves of a node form a partition of that node.
Our splitting graphs do not have this property.
Clearly, a splitting graph for a suspension automaton with $n$ states cannot have more than $2^n$ nodes, as the set of nodes is a subset of $\mathcal{P}(Q)\setminus \emptyset$ by \definitionref{def:splitgraph}.
 For $n \in \mathbb{N}$ with $n \geq 3$, consider suspension automaton $S_n = (\{1, \dots, n\},T_n ,1)$, where $T_n$ consists of the following output transitions:
 \begin{eqnarray*}
  T_n & = & \{(n,n,1)\} \cup \{(s,x,s+1) \mid s \in \{1,\dots,n-1\}, x \in \{1,\dots,n-1\}, s \neq x \}.
 \end{eqnarray*}
 \figureref{fig:expaut} depicts suspension automata $S_n$ for $n = 3,4,5$.
  We can prove \lemmaref{lem:expsplitgraph} by showing that $S_n$ has a splitting graph with $2^{n-1}$ nodes. 
 
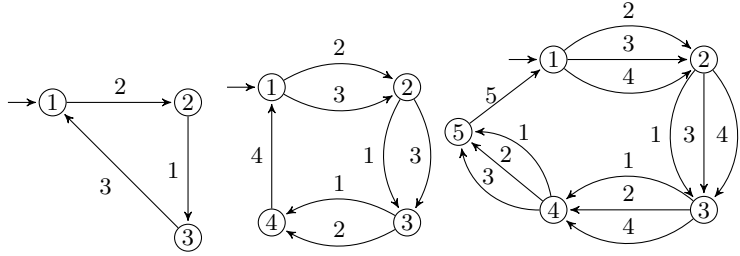
\begin{figure}[ht!]
	\centering
	\begin{tikzpicture}[shorten >=1pt,node distance=1.8cm,>=stealth']
	\tikzstyle{every state}=[draw=black,text=black,inner sep=1pt,minimum
	size=10pt,initial text=]
	\node[state,initial,initial where=left] (1) {1};
	\node[state] (2) [right of=1] {2};
	\node[state] (3) [below of=2] {3};
	\path[->]
	(1) edge node [above] {2} (2)
	(2) edge node [left] {1} (3)
	(3) edge node [below left] {3} (1)
	;
	\end{tikzpicture}
	\begin{tikzpicture}[shorten >=1pt,node distance=1.8cm,>=stealth']
	\tikzstyle{every state}=[draw=black,text=black,inner sep=1pt,minimum
	size=10pt,initial text=]
	\node[state,initial,initial where=left] (1) {1};
	\node[state] (2) [right of=1] {2};
	\node[state] (3) [below of=2] {3};
	\node[state] (4) [below of=1] {4};
	\path[->]
	(1) edge [bend left] node [above] {2} (2)
	(1) edge [bend right] node [above] {3} (2)
	(2) edge [bend right] node [left] {1} (3)
	(2) edge [bend left] node [left] {3} (3)
	(3) edge [bend right] node [above] {1} (4)
	(3) edge [bend left] node [above] {2} (4)
	(4) edge node [left] {4} (1)
	;
	\end{tikzpicture}
	\begin{tikzpicture}[shorten >=1pt,node distance=2cm,>=stealth']
	\tikzstyle{every state}=[draw=black,text=black,inner sep=1pt,minimum
	size=0pt,initial text=]
	\node[state,initial,initial where=left] (1) {1};
	\node[state] (2) [right of=1] {2};
	\node[state] (3) [below of=2] {3};
	\node[state] (4) [below of=1] {4};
	\node[state] (5) [below left=0.7cm and 1cm of 1] {5};
	\path[->]
	(1) edge [bend left=40] node [above] {2} (2)
	(1) edge node [above] {3} (2)
	(1) edge [bend right=40] node [above] {4} (2)
	(2) edge [bend right=40] node [left] {1} (3)
	(2) edge node [left] {3} (3)
	(2) edge [bend left=40] node [left] {4} (3)
	(3) edge [bend right=40] node [above] {1} (4)
	(3) edge node [above] {2} (4)
	(3) edge [bend left=40] node [above] {4} (4)
	(4) edge [bend right=40] node [above] {1} (5)
	(4) edge node [above] {2} (5)
	(4) edge [bend left=40] node [above] {3} (5)
	(5) edge node [left] {5} (1)
	;
	\end{tikzpicture}
	\caption{Suspension automaton $S_n$ for $n=3$, $n=4$, and $n=5$}
	\label{fig:expaut}
\end{figure}

\begin{lemma} \label{lem:expsplitgraph}
Let $S$ be a suspension automaton with $n$ states. Then a
splitting graph returned by \algorithmref{alg:splitgraphconstruction} has $\mathcal{O}(2^n)$ nodes. This bound is tight.
\end{lemma}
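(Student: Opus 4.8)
The plan is to obtain the $\mathcal{O}(2^n)$ upper bound directly from \definitionref{def:splitgraph} and to prove tightness with the family $S_n$ of \figureref{fig:expaut}. The upper bound is immediate: by \definitionref{def:splitgraph} the vertex set $V$ of any splitting graph for a suspension automaton with $n$ states $Q$ satisfies $\emptyset \notin V \subseteq \mathcal{P}(Q)$, so $|V| \le 2^n - 1 = \mathcal{O}(2^n)$, and in particular this holds for the graph returned by \algorithmref{alg:splitgraphconstruction}.

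For tightness I would show that on input $S_n$ \emph{every} run of \algorithmref{alg:splitgraphconstruction} returns a splitting graph with exactly $2^{n-1}$ nodes. Two facts about $T_n$ drive the argument: (i) for $1 \le s \le n-1$ we have $\out{s} = \{1,\dots,n-1\}\setminus\{s\}$ and $T_n(s,x) = s+1$ for all $x \in \out{s}$, so among the states $\{1,\dots,n-1\}$ the state $s$ is precisely the one that does not enable output $s$; and (ii) $\out{n} = \{n\}$ with $T_n(n,n) = 1$. From (i)--(ii) we get $\out{i} \cap \out{n} = \emptyset$ for $i<n$, hence $i \not\compatible n$; and by downward induction on $n-j$, using that every common output of two states $i<j\le n-1$ leads to the pair $(i+1,j+1)$, no two distinct states of $S_n$ are compatible. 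Consequently, by \corollaryref{cor:completesplitgraph}, \algorithmref{alg:splitgraphconstruction} terminates with a complete splitting graph $Y$, and every leaf of $Y$ is a singleton (a leaf with two or more states would contain a pair of incompatible states).

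I would then identify the node set of $Y$ as exactly $\{Q\} \cup \{\{n\}\} \cup \{P \mid \emptyset \subsetneq P \subsetneq \{1,\dots,n-1\}\}$, which has $2 + (2^{n-1}-2) = 2^{n-1}$ elements. That $Y$ contains no further nodes: $S_n$ has no inputs, so the root $Q$ must be split on output, and checking the test on line~7 of \algorithmref{splitalg} for each $x \in \out{Q}$ forces the children of $Q$ to be $\{n\}$ (from $x=n$) together with the $(n-2)$-subsets $\enabled{Q}{x} = \{1,\dots,n-1\}\setminus\{x\}$; every later split is again a split on output and adds only subsets of the node being split (via $\enabled{l}{x}$ on line~8 or $\Pi(l,x,v)$ on line~12), so inductively every node other than $Q$ is $\{n\}$ or a nonempty subset of $\{1,\dots,n-1\}$, and $\{1,\dots,n-1\}$ never appears since its only possible parent $Q$ does not have it as a child. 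That all these nodes do appear: the $(n-2)$-subsets of $\{1,\dots,n-1\}$ appear as children of $Q$; and for $1 \le k \le n-3$, assuming all $(k+1)$-subsets of $\{1,\dots,n-1\}$ are nodes, each such $P$ contains two (hence incompatible) distinct states, so $P$ is internal in $Y$, hence eventually split, necessarily on output since $S_n$ has no inputs, and since state $x$ does not enable output $x$ the test on line~7 holds for every $x \in P$ and line~8 adds $\enabled{P}{x} = P \setminus \{x\}$; varying $P$ and $x$ (take $P = R \cup \{x\}$ for a target $k$-subset $R$ and any $x \notin R$) yields every $k$-subset.

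The step needing the most care is checking that these splits can actually be performed by the algorithm: when a node $P \subseteq \{1,\dots,n-1\}$ with $|P|\ge 2$ is split on output, the outputs $x \in \{1,\dots,n-1\}\setminus P$ satisfy $\enabled{P}{x} = P$ and so require a least common ancestor of $\after{P}{x} = \{s+1 \mid s \in P\}$. Rather than exhibiting an explicit order in which to split leaves, I would route this through \corollaryref{cor:completesplitgraph}: the algorithm is guaranteed to terminate with a complete splitting graph, and since every $P$ with $|P|\ge2$ is not pairwise compatible it must eventually become internal, so the required induced splits do get carried out (after other leaves are split first if necessary, which \lemmaref{lemma splittable leaf} always allows), and these only add subsets of $P$, hence no nodes beyond those already listed. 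This gives $|V| = 2^{n-1} = \Theta(2^n)$ on every run, so the $\mathcal{O}(2^n)$ bound is tight.
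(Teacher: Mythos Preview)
Your proof is correct and follows essentially the same approach as the paper: the same family $S_n$, the same key observation that state $x$ is the unique state in $\{1,\ldots,n-1\}$ not enabling output $x$, and the same descent through subset sizes to exhibit all $2^{n-1}-2$ nonempty proper subsets of $\{1,\ldots,n-1\}$ as nodes. You are in fact more careful than the paper on two points it leaves implicit: you argue that no nodes outside the claimed set can arise (so the count is exact rather than just a lower bound), and you handle the LCA-existence requirement for outputs $x\notin P$ by appealing to \corollaryref{cor:completesplitgraph} instead of silently assuming it.
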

\begin{proof}
We already showed that a splitting graph has at most an exponential number of states.
We will now prove that \algorithmref{alg:splitgraphconstruction} returns a splitting graph with exactly $2^{n-1}$ nodes for suspension automaton
$S_n$ with $n \ge 3$:

We first note that different states are pairwise incompatible, since we can easily construct a test case identifying any of the states: observing output $n$, after having observed $i$ (other) outputs, means that the test case was executed from state $n-i$. Consequently, if a node of the split graph contains more than 1 state, it has children.
 
  The root node is split on output, so it has children for all size $n-2$ subsets of $\{ 1,\ldots, n-1 \}$, and it has child $\{n\}$.
 We now show that the split graph has nodes for all non-empty subsets of $\{ 1,\ldots, n-1 \}$, except trivial subset $\{ 1,\ldots, n-1 \}$.
 
Suppose we have a non-trivial subset $s$ of $\{ 1,\ldots, n-1 \}$ with at least two elements.
For all $x \in s$ state $x$ does not enable output $x$, but all other states of $s$ do, 
so we obtain child $s \setminus \{x\}$ by a split on output $x$.
By repeatedly removing a single element by splitting on that element, we can show that the split graph
contains a node for \emph{any} nonempty, non-trivial subset of $\{ 1,\ldots, n-1 \}$.
There are $2^{n-1} -2$  nonempty, non-trivial subsets of $\{ 1,\ldots, n-1 \}$.
In addition, the split graph also has nodes $\{ 1,\ldots, n \}$ and $\{ n \}$.
Hence, in total the splitting graph has $2^{n-1}$ nodes. \qed
\end{proof}

\section{Extracting Test Cases from a Splitting Graph}
\label{sec:distgraph}


\algorithmref{alg:distgraph} retrieves CCS terms, of which the associated automata are test cases that distinguish states. The algorithm ``concatenates'' several CCS terms while keeping track of the current set of states.
Each CCS term ensures that one state is distinguished from the rest because it lacks some output.
We compute the current states for the leaves of the CCS term, and attach another CCS term to this leaf, if the current set of states consists of some incompatible pair of states.
Hence in total, the automaton of the resulting CCS term distinguishes multiple pairs of states.
 
 \begin{algorithm}[!ht]
 \caption{Retrieving a test case from a splitting graph}
  \label{alg:distgraph}
  \SetKwInOut{Input}{Input}
  \Input{A suspension automaton $S = (Q, T, q_0)$}
  \Input{A complete splitting graph $Y=(V, E, W)$ for $S$}
  \SetKwFunction{func}{compDG}
  \SetKwProg{myalg}{Function}{:}{}
  \func($S,Y,Q,\Nil$)\;
  where\\
  \myalg{\func($S,Y,P,F$)}{
    \eIf{$\Diamond(P)$}{
    \Return{$F$}
    }{
        \uIf{$F = \Nil$}{
           Let $v \in LCA(Y,P)$\;
            \Return{\func($S,Y,P,W(v)$)}
        }\uElseIf{$F = \mu.F_1$ \textnormal{for some CCS term} $F_1$}{
            \Return{$\mu$.\func($S,Y,\after{P}{\mu},F_1$)}
        }\ElseIf{$F = F_1 + F_2$ \textnormal{for some CCS terms} $F_1,F_2$}{
            \Return{\func($S,Y,P,F_1$) $+$ \func($S,Y,P,F_2$)}
        }
    }
  }
  \SetKwInOut{Output}{Output}
\Output{A CCS term $F$ such that, for each $\sigma \in \Obs{A_F}$, $\Diamond(\after{Q}{\sigma})$.}
\end{algorithm}

\begin{example}
We construct the adaptive distinguishing graph for the suspension automaton from \figureref{fig:spec}, using the splitting graph from \figureref{fig:splitdistgraph}, which also shows the result of this example.
\algorithmref{alg:distgraph} starts with $P = \{1,2,3,4\}$ and $F = \Nil$. 
Hence, we search for a least common ancestor for $Q$. This will be the root node of the splitting graph, with witness $x.\Nil + y.\Nil$.

The function is then called with $F = x.\Nil + y.\Nil$, and will result in two recursive calls of the function on line 13 for $P = \{1,2,3,4\}$ and $F=x.\Nil$, and $P = \{1,2,3,4\}$ and $F=y.\Nil$ respectively.
In the first case, the condition of line 10 holds, and we the function is called for $P = \after{\{1,2,3,4\}}{x} = \{1,4\}$ and $F = \Nil$, which means that lines 8-9 are executed next, using the only LCA for \{1,4\}, namely \{1,4\}. 

The algorithm will then do some more recursive calls, checking whether the witness of \{1,4\} must be extended further to distinguish more states. This will not be the case, because only singleton sets are reached at the leaves of the witness, and $\compatible\{q\}$ holds for any state $q$, since $\compatible$ is reflexive. 
Hence, we need to prepend $x$ to the witness of \{1,4\} to obtain the  left term of the + operator of the resulting CCS term of the algorithm:  $x.(x.\Nil + y.a.(x.\Nil + y.\Nil))$.

As $P = \after{\{1,2,3,4\}}{y} = \{1,2\}$, its LCA \{1,2,3\} will be used to complete the construction of the right term of the + operator of the result.

The associated automaton of the resulting CCS term is an adaptive distinguishing graph for the suspension automaton, as it distinguishes all incompatible state pairs.
\end{example}

\begin{lemma} \label{lem:algterminates}
\algorithmref{alg:distgraph} terminates and outputs a CCS term $F$ that denotes a test case satisfying,
for each $\sigma \in \Obs{A_F}$, $\Diamond(\after{Q}{\sigma})$.
\end{lemma}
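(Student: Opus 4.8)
The plan is to prove the three claims — termination, that $F$ denotes a test case, and the observation-compatibility property $\forall \sigma \in \Obs{A_F} : \Diamond(\after{Q}{\sigma})$ — by analysing the recursive structure of {\tt compDG}. The natural device is a termination measure on the recursive calls. For each call {\tt compDG}$(S,Y,P,F)$ I would track a lexicographic pair: first the ``remaining splitting-graph depth'' of $P$, roughly the length of the longest chain of internal nodes below an LCA of $P$ in $Y$ plus the height of the witness CCS term being processed, and second the structural size of the current CCS term $F$. The recursive call on line 9 (replacing $\Nil$ by $W(v)$) does not decrease $F$, so for that step I need the first component to drop strictly; this should follow because $W(v)$ is the witness of an internal node $v$ that is a \emph{least common ancestor} of $P$, so after consuming the first label $\mu$ of $W(v)$ the set $\after{P}{\mu}$ sits inside a \emph{child} of $v$, which has strictly smaller depth by Definition~\ref{def:splitgraph}(2). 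The calls on lines 11 and 13 strictly decrease the size of $F$ (peeling off a prefix $\mu.$ or splitting $F_1 + F_2$) while not increasing the depth component, since $\after{P}{\mu} \subseteq P$ and $\Obs$ of a subexpression is contained in the suffixes of $\Obs{A_F}$. Since $Y$ is finite and CCS terms are finite, the measure is well-founded and the algorithm terminates. The base case $\Diamond(P)$ returns immediately.

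For the second claim, that $F$ denotes a test case, I would argue by induction on the recursion, using Definition~\ref{def:splitgraph}: the witnesses $W(v)$ attached at internal nodes are, by definition, CCS terms whose associated automata are test cases. The operations {\tt compDG} performs — prepending a single label $\mu$ (which keeps the root an input state if $\mu \in I$, or contributes an output summand if $\mu \in O$), and forming $F_1 + F_2$ of two recursively-built terms — preserve the test-case property, because summing only combines output states (inputs in a witness only appear as singleton prefixes coming from an earlier $W(v)$, never as a summand), acyclicity is preserved as the splitting graph is a DAG and each $W(v)$ is acyclic, and each state still enables either a single input or a set of outputs. The determinism condition of Definition~\ref{def:CCS} should also be maintained because the witnesses are built from split operations that never duplicate a label at the same node; I expect this to need a brief argument that the ``concatenation'' never re-introduces a label clash, leaning on the structure of $W(v)$ inherited from Algorithm~\ref{splitalg}.

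For the third and main claim, I would prove the stronger invariant that whenever {\tt compDG}$(S,Y,P,F)$ returns a term $G$, then for every $\sigma \in \Obs{A_G}$ we have $\Diamond(\after{P}{\sigma})$; the lemma then follows by instantiating $P = Q$, $F = \Nil$. The induction is again on the recursion. In the base case $\Diamond(P)$ holds and $G = F$; but here I need $\Obs{A_F}$ to only contain $\epsilon$ or traces that keep $\after{P}{\cdot}$ within compatible sets — actually the cleanest route is to prove the invariant only for the returned term together with the fact that the algorithm is \emph{always} called with $F$ equal to the witness of an LCA of $P$ (or a suffix thereof), so that the link in Definition~\ref{def:splitgraph} between $\Obs{A_{W(v)}}$ and the children of $v$ can be exploited: each observation $\sigma$ of $W(v)$ drives $P$ into (a subset of) some child $c \in \post(v)$ via $\enabled{c}{\sigma} = \enabled{v}{\sigma}$, hence $\after{P}{\sigma}$ is contained in a strictly smaller node, and the recursion on that node (triggered because the leaf of the witness is revisited by {\tt compDG} with the updated $P = \after{P}{\mu}$) inductively guarantees compatibility further down. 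The inductive step for $\mu.F_1$ uses $\Obs{A_{\mu.G_1}} = \mu \cdot \Obs{A_{G_1}}$ and the hypothesis applied to $\after{P}{\mu}$; the step for $F_1 + F_2$ uses $\Obs{A_{G_1 + G_2}} \subseteq \Obs{A_{G_1}} \cup \Obs{A_{G_2}}$ (modulo pruning by $S$) together with both hypotheses.

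The main obstacle I anticipate is the third claim: making precise the coupling between the recursion on the CCS-term structure of a witness $W(v)$ and the recursion on the splitting graph. Concretely, one must show that when {\tt compDG} walks down to a leaf of $W(v)$ while maintaining the current state set $P$, the state set that arrives at that leaf is exactly $\enabled{v}{\sigma}$ for the corresponding observation $\sigma$, so that Definition~\ref{def:splitgraph}'s witness condition applies and places $P$ inside a child node — and then one must verify that this child is itself an LCA (or has an LCA) of the new $P$, so that the next $F = \Nil$ branch on line 8 can fire and the induction continues with a strictly smaller node. Setting up the right invariant statement — strong enough to carry the induction through all four branches of the {\tt if}, yet matching the actual calling pattern of the algorithm — is where the real work lies; the termination and test-case parts are comparatively routine.
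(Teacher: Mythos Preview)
Your termination argument has a genuine gap. At the recursive call on line~9 the set $P$ is unchanged and $F$ grows from $\Nil$ to $W(v)$, so the second component of your lexicographic pair strictly increases, and any quantity defined purely in terms of an LCA of $P$ in $Y$ is unchanged as well --- hence the first component does not drop either. Your justification (``after consuming the first label $\mu$ of $W(v)$ the set $\after{P}{\mu}$ sits inside a child of $v$'') refers to a \emph{later} call (line~11), not to line~9 itself, so it cannot establish that the measure decreases at line~9. Moreover, that specific claim is not supported by Definition~\ref{def:splitgraph}: the witness condition there is stated only for full observations $\sigma \in \Obs{A_{W(v)}}$, not for single-label prefixes, and in general $\after{P}{\mu}$ need not even be contained in $v$ (it lies in $\after{v}{\mu}$, which is a different set of states of $S$).

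The paper avoids a per-call measure altogether and argues in two layers. Between two successive executions of line~8 the recursion is purely structural on a fixed CCS term (lines~11 and~13 strictly shrink $F$), so that inner recursion terminates. After a complete observation $\sigma$ of $W(v)$ has been consumed, the witness condition of Definition~\ref{def:splitgraph} yields a child $c$ of $v$ with $\enabled{c}{\sigma} = \enabled{v}{\sigma}$; from $P \subseteq v$ one gets $\enabled{P}{\sigma} \subseteq c$, and from $v \in LCA(Y,P)$ one has $P \not\subseteq c$, whence $|\after{P}{\sigma}| \le |\enabled{P}{\sigma}| < |P|$. Thus line~8 is executed at most $|Q|-1$ times along any branch, which gives termination and the leaf-compatibility claim in one stroke, without the delicate invariant you anticipate for the third part. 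Your treatment of the test-case property and your identification of the coupling obstacle are more thorough than the paper's (which is quite terse on those points), but you should replace the broken lexicographic measure by this simpler $|P|$-counting argument.
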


\begin{proof}
Let $S=(Q,T,q_0)$ be the suspension automaton, and $Y$ the splitting graph for $S$, that we provide to \algorithmref{alg:distgraph}.
We note that all computations are atomic, or reducing the size of the CCS expression before making a recursive call, except line 8.
However, LCAs can be computed straightforwardly: start at the root, if it is not an LCA, continue with the children containing the set of states the LCA is computed for, and repeat. This procedure always succeeds in finign an LCA, due to the following argument.
Any set of states, with at least two incompatible states, has a least common ancestor in the splitting graph, as the leaves of $Y$ are sets of mutually compatible states, its root node contains all the states from $S$, and all the states of a non-leaf are contained in at least one of its children, by \definitionref{def:splitgraph}.

By construction, \algorithmref{alg:distgraph} follows the labels of each $\sigma \in \Obs{A_{W(v)}}$ for nodes $v$ obtained on line 8.
By the property from \definitionref{def:splitgraph} that $\enabled{c}{\sigma} = \enabled{v}{\sigma}$, and $c \subset v$, we see that $|P| > |\after{P}{\sigma}|$, so after visiting line 8 at most $|Q|-1$ times, set $P$ will only contain mutually compatible states. \qed
\end{proof}

\algorithmref{alg:distgraph} does not always construct an adaptive distinguishing graph for all incompatible state pairs. To ensure this, it must be able to select an ``injective'' splitting node as LCA on line 8.
This will guarantee that a transition never maps two incompatible states to two compatible states (which cannot be distinguished any more), or that an input is used that is not enabled in some states.

\begin{definition} \label{def:injective} 
Let $S = (Q,T,q_0)$ be a suspension automaton, $P \subseteq Q$ a set of states, and $\mu \in L$ a label.
Then $\mu$ is \emph{injective} for $P$ if
\begin{eqnarray*}
\forall q,q' \in P: q \not\compatible q' & \implies &
T(q, \mu) \downarrow \wedge\ T(q',\mu) \downarrow \wedge\ T(q,\mu) \not\compatible T(q',\mu) \\
& &\vee \mu \in O \setminus (\out{q} \cap \out{q'})
 \end{eqnarray*}
\end{definition}

Analogous to the result of  \cite{leeyannakakis}, Theorem~\ref{thm:alldisting} asserts that if an adaptive distinguishing graph exists our algorithms will find it, provided there are no compatible states. This last assumption is motivated in \exampleref{exmp:noadgcomp}.
\iflong
We first need to establish the following lemma.
\begin{lemma}
\label{la: adg preserved by transitions}
Let $S$ be a suspension automaton such that all pairs of distinct states are incompatible.
Suppose $A = (Q, T, q_0)$ is an adaptive distinguishing graph for a set $P$ of states of $S$.
Suppose that $T(q_0, \mu) = q_1$, for some label $\mu$ and state $q_1$.
Then $\mu$ is injective for $P$ and $A / q_1$ is an adaptive distinguishing graph for $\after{P}{\mu}$.
\end{lemma}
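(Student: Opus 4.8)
The plan is to prove the two conclusions of the lemma in sequence, using the definitions of test case and adaptive distinguishing graph together with \lemmaref{la: test cases preserved by transitions}. I would start by recalling that, since $A$ is an adaptive distinguishing graph for $P$, $A$ is in particular a test case for $P$, and for every pair $q, q' \in P$ with $q \not\compatible q'$ the test case $A$ distinguishes $q$ and $q'$, i.e., $\Obs{A \| (S/q)} \cap \Obs{A \| (S/q')} = \emptyset$. The key observation is that the transition $T(q_0, \mu) = q_1$ tells us something about the root of $A$: since each state of a test case is an input state, an output state, or a leaf, and $q_0$ has an outgoing $\mu$-transition, $q_0$ is either an input state (and then $\mu$ is its unique enabled input) or an output state (and then $\mu \in O$).

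First I would prove that $\mu$ is injective for $P$. Take $q, q' \in P$ with $q \not\compatible q'$. I split on the two cases for the root of $A$. If $q_0$ is an input state with enabled input $\mu = a \in I$: because $A$ is a test case for both $S/q$ and $S/q'$, \definitionref{def:testcase} forces $\inp{q_0} \subseteq \inp{q}$ and $\inp{q_0} \subseteq \inp{q'}$, hence $a \in \inp{q} \cap \inp{q'}$, so $T(q,a)\downarrow$ and $T(q',a)\downarrow$. It remains to show $T(q,a) \not\compatible T(q',a)$. Suppose for contradiction $T(q,a) \compatible T(q',a)$; then by \definitionref{def:compatible} there is a compatibility relation $R$ with $(T(q,a), T(q',a)) \in R$, and this would make every observation of $A\|(S/q)$ that starts with $a$ matchable by an observation of $A\|(S/q')$ — more precisely, I would use that $A/q_1$ is a test case for $\after{\{q,q'\}}{a}$ (by \lemmaref{la: test cases preserved by transitions}) and that compatible states cannot be distinguished by any test case (\lemmaref{lem:disttreenotcompatible}), contradicting that $A$ distinguishes $q$ and $q'$. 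Since all observations of $A$ begin with $a$, any distinguishing observation is of the form $a\rho$, and a distinguishing observation for $q,q'$ in $A$ yields a distinguishing observation for $T(q,a), T(q',a)$ in $A/q_1$; so $T(q,a) \not\compatible T(q',a)$, giving the first disjunct of injectivity. If $q_0$ is an output state, then $\mu = x \in O$; in this case I directly get the needed disjunct of \definitionref{def:injective}: if $x \in \out{q} \cap \out{q'}$, I argue exactly as before that $T(q,x) \not\compatible T(q',x)$ (using that a distinguishing observation $x\rho$ restricts to a distinguishing observation $\rho$ for the successors), and otherwise $x \in O \setminus (\out{q}\cap\out{q'})$, which is precisely the second disjunct.

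Next I would prove that $A/q_1$ is an adaptive distinguishing graph for $\after{P}{\mu}$. That $A/q_1$ is a test case for $\after{P}{\mu}$ is immediate from \lemmaref{la: test cases preserved by transitions}. For the distinguishing property, take $r, r' \in \after{P}{\mu}$ with $r \not\compatible r'$. Pick $q, q' \in P$ with $T(q,\mu) = r$ and $T(q',\mu) = r'$ (they exist by definition of $\after{\cdot}{\cdot}$, and in the output case I also use that $\mu \in \out{q} \cap \out{q'}$ because both have a $\mu$-successor). Since $r \not\compatible r'$ and compatibility of $q$ and $q'$ would, via \definitionref{def:compatible}, force compatibility of their $\mu$-successors $r$ and $r'$ when $\mu$ is enabled in both — which it is — we get $q \not\compatible q'$. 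Hence $A$ distinguishes $q$ and $q'$: $\Obs{A\|(S/q)} \cap \Obs{A\|(S/q')} = \emptyset$. Every observation of $A$ passes through $q_1$ with first label $\mu$ (since $q_0$ enables only $\mu$, or in the output case every relevant observation of the composed system under $S/q$ with $\mu \in \out{q}$ begins with $\mu$), so $\Obs{A\|(S/q)} = \mu \cdot \Obs{(A/q_1)\|(S/r)}$ and likewise for $q', r'$; therefore $\Obs{(A/q_1)\|(S/r)} \cap \Obs{(A/q_1)\|(S/r')} = \emptyset$, i.e., $A/q_1$ distinguishes $r$ and $r'$. Since $r, r'$ were arbitrary incompatible states of $\after{P}{\mu}$, $A/q_1$ is an adaptive distinguishing graph for $\after{P}{\mu}$.

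\textbf{Main obstacle.} The delicate point is the bookkeeping in the output case: when $q_0$ is an output state, observations of $A\|(S/q)$ only retain those outputs actually enabled by $S/q$, so I must be careful that "every observation begins with $\mu$" really holds for the \emph{composed} automaton $A\|(S/q)$ — this is why the case $x \notin \out{q} \cap \out{q'}$ must be peeled off first (there $x$ may not be available in one of the compositions, and that mismatch is itself what distinguishes the states or what injectivity permits). Getting the correspondence $\Obs{A\|(S/q)} = \mu\cdot\Obs{(A/q_1)\|(S/r)}$ exactly right — including the empty-observation edge cases and the fact that $A$ being a test case for both $S/q$ and $S/q'$ is what keeps the structure aligned — is where the real work lies; the rest is unwinding definitions.
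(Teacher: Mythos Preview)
The paper states this lemma without proof, so I evaluate your argument on its own merits. Your overall strategy is sound and the input case goes through, but there is a genuine gap in the output case of the second part, and a related imprecision that you flag as the ``main obstacle'' but do not actually resolve.

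The gap is in your derivation of $q \not\compatible q'$ from $r \not\compatible r'$. You write that ``compatibility of $q$ and $q'$ would, via \definitionref{def:compatible}, force compatibility of their $\mu$-successors $r$ and $r'$ when $\mu$ is enabled in both''. That implication holds when $\mu$ is an input (the universal clause of \definitionref{def:compatible}), but it is \emph{false} when $\mu$ is an output: the definition only asserts that \emph{some} shared output leads to a related pair, not that every shared output does. The repair is to actually use the hypothesis of the lemma, which you otherwise barely touch: from $r \not\compatible r'$ we get $r \neq r'$ (reflexivity of $\compatible$); determinism of $T$ then gives $q \neq q'$; and the assumption that all distinct states of $S$ are incompatible yields $q \not\compatible q'$ immediately. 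This is exactly why the hypothesis is there, and it replaces your faulty contrapositive. (The same shortcut also streamlines your injectivity argument: once you have shown, via the common observation $x\rho$, that $T(q,x)=T(q',x)$ would contradict $A$ distinguishing $q$ and $q'$, distinctness of the successors plus the hypothesis gives incompatibility without invoking \lemmaref{lem:disttreenotcompatible}.)

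The second point is the claimed equality $\Obs{A\|(S/q)} = \mu \cdot \Obs{(A/q_1)\|(S/r)}$. In the output case $q_0$ may enable several outputs, so $A\|(S/q)$ will in general have observations beginning with any label in $\outm{q_0}\cap\outm{q}$, and the equality fails. What you actually need, and what is easy to check, is only the inclusion $\mu \cdot \Obs{(A/q_1)\|(S/r)} \subseteq \Obs{A\|(S/q)}$ (and the analogous one for $q',r'$): disjointness of the larger sets then forces disjointness of the $\mu$-prefixed subsets, hence of $\Obs{(A/q_1)\|(S/r)}$ and $\Obs{(A/q_1)\|(S/r')}$. Phrase it as an inclusion and the bookkeeping difficulty you identify evaporates; there is no need to argue that ``every relevant observation begins with $\mu$''.
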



\fi

\begin{theorem}
\label{thm:alldisting}
 Let $S$ be a suspension automaton such that all pairs of distinct states are incompatible. 
 Then $S$ has an adaptive distinguishing graph if and only if, 
 during construction of a splitting graph $Y$ for $S$, \algorithmref{splitalg} can and does only perform injective splits,
 that is, whenever \algorithmref{splitalg} splits a leaf $l$ on output, then
 $x$ is injective for $l$, for all $x \in \outm{l}$, and whenever it splits a leaf $l$ on input $a$, then
 $a$ is injective for $l$.   Moreover, in this case \algorithmref{alg:distgraph} constructs an adaptive distinguishing graph for $S$, when $Y$ is given as input.
\end{theorem}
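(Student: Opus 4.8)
The plan is to prove the biconditional together with the final ``moreover'' claim, using Lemma~\ref{la: adg preserved by transitions} as the main tool for the forward direction and the correctness properties of Algorithms~\ref{splitalg} and~\ref{alg:distgraph} for the backward direction.

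\textbf{($\Rightarrow$) From an adaptive distinguishing graph to injective splits.} Suppose $S$ has an adaptive distinguishing graph $A$. First I would observe that since all pairs of distinct states of $S$ are incompatible, $A$ must distinguish \emph{every} pair of distinct states. The key is to show that every split performed by Algorithm~\ref{splitalg} can be taken to be injective. I would argue by induction on the order in which nodes are split, maintaining the invariant that for each internal node $v$ created so far there is an adaptive distinguishing graph $A_v$ for the state set $v$ that is ``compatible'' with the witnesses chosen, and that the label(s) used to split $v$ are injective for $v$. The base case is the root $Q$, handled directly by $A$. For the inductive step, when a splittable leaf $l$ is about to be split: the existence of an adaptive distinguishing graph $A_l$ for $l$ (obtained as a sub-DAG of $A$, or more carefully via Lemma~\ref{la: test cases preserved by transitions} and Lemma~\ref{la: adg preserved by transitions} applied along the path from the root) tells us what its root looks like. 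If the root of $A_l$ is an input state enabling $a$, then by Lemma~\ref{la: adg preserved by transitions}, $a$ is injective for $l$ and $A_l/T(\cdot,a)$ is an adaptive distinguishing graph for $\after{l}{a}$; this means $\after{l}{a}$ has no incompatible pair collapsed, so an LCA exists and the induced split on $a$ is available and injective. If the root of $A_l$ is an output state, then for each $x\in\out{l}$, Lemma~\ref{la: adg preserved by transitions} gives that $x$ is injective for $l$ and $A_l/T(\cdot,x)$ distinguishes $\after{l}{x}$; again either some state of $l$ lacks $x$ (output split directly available) or an LCA for $\after{l}{x}$ exists. In all cases the relevant split is both available and injective, and the inductive invariant is re-established for the new children. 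Since by Corollary~\ref{cor:completesplitgraph} the construction terminates with a complete splitting graph, we conclude that it can perform only injective splits.

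\textbf{($\Leftarrow$) From injective splits to an adaptive distinguishing graph, plus the ``moreover''.} Conversely, suppose $Y$ is built using only injective splits. I would then run Algorithm~\ref{alg:distgraph} on $S$ and $Y$, and show the CCS term $F$ it outputs denotes an adaptive distinguishing graph for $S$, i.e. it distinguishes every pair of distinct (hence incompatible) states. By Lemma~\ref{lem:algterminates}, $F$ denotes a test case with the property that $\Diamond(\after{Q}{\sigma})$ for each $\sigma\in\Obs{A_F}$. The heart of the argument is to show that injectivity of every split used on line~8 forces the converse-style property: if $q\not\compatible q'$, then $q$ and $q'$ cannot reach the same leaf of $A_F$ while still being traced by a common observation; equivalently, $\Obs{A_F \| (S/q)} \cap \Obs{A_F\|(S/q')} = \emptyset$. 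I would prove this by tracking the current set $P$ along the recursion of Algorithm~\ref{alg:distgraph}: whenever we follow a label $\mu$ (from a witness $W(v)$ of an injectively-split node $v$), injectivity of $\mu$ for $v$ guarantees that $T(q,\mu)$ and $T(q',\mu)$ remain defined and incompatible, \emph{unless} $\mu$ is an output not enabled in both — in which case $q$ and $q'$ are already separated because one of them has no run with that trace. So incompatible pairs are never ``lost'': they are either separated along the way by a missing output, or they arrive at a leaf still incompatible, contradicting $\Diamond(\after{Q}{\sigma})$ at that leaf. This establishes that $A_F$ distinguishes all incompatible pairs, which is exactly the ``moreover'' claim; and the mere existence of such an $A_F$ proves the ``if'' direction of the biconditional.

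\textbf{Main obstacle.} I expect the delicate part to be the forward direction's inductive step: precisely relating the ``shape'' of a hypothetical adaptive distinguishing graph $A_l$ for a leaf $l$ to the splitting choices the algorithm makes, and in particular ruling out the degenerate situation (cf.\ Example~\ref{exmp:counterexampleinputsplit}) where an induced split would be empty. The crux is that injectivity of $\mu$ for $l$, combined with $A_l/T(\cdot,\mu)$ being an adaptive distinguishing graph for $\after{l}{\mu}$, forces $\after{l}{\mu}$ to contain an incompatible pair and therefore (by completeness of $Y$ and Lemma~\ref{compatibility reduces to validity}) to have a least common ancestor — so that the split is genuinely available, not just ``named''. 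Getting Lemma~\ref{la: adg preserved by transitions} to do exactly this work, and handling the interaction between the DAG-sharing of witnesses in $Y$ and the tree-like unfolding produced by Algorithm~\ref{alg:distgraph}, is where the real care is needed; the rest is bookkeeping over the recursion structure.
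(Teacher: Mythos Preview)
Your backward direction and the ``moreover'' claim are essentially the paper's argument and are fine.

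The forward direction has a real gap. You correctly identify the obstacle --- showing that the split suggested by the root of $A_l$ is actually \emph{available}, i.e., that $LCA(Y,\after{l}{\mu})\neq\emptyset$ --- but your proposed resolution does not work. You write that $\after{l}{\mu}$ contains an incompatible pair and therefore ``by completeness of $Y$'' has a least common ancestor. But $Y$ is \emph{not} complete at the moment you are trying to split $l$; completeness is what you are trying to achieve. In an incomplete $Y$, a set containing an incompatible pair can perfectly well be contained in a leaf (indeed, at the very first step every subset of $Q$ is contained in the unique leaf $Q$), so no LCA exists. Your inductive invariant (an ADG $A_v$ exists for every internal $v$) is also too weak to help: since $A$ itself is an ADG for every subset of $Q$, the invariant is trivially true and carries no information about which leaf to split or why an LCA is available.

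The paper closes this gap with a selection argument you are missing: among the leaves of $Y$ of \emph{maximal} cardinality, pick one that is contained in the current set of a node $r$ of (a tree unfolding of) $A$ of \emph{minimal} height, and use the action at $r$. Injectivity together with ``all distinct states are incompatible'' gives $|\after{l}{\mu}|=|l|$; if $\after{l}{\mu}$ were contained in some leaf $l'$, then $l'$ would also have maximal cardinality, hence $l'=\after{l}{\mu}$, and $l'$ sits in the current set of the child of $r$ --- contradicting the minimality of the height of $r$. This is precisely what forces $LCA(Y,\after{l}{\mu})\neq\emptyset$ without assuming $Y$ is complete. Your outline needs this (or an equivalent) extremal choice; as written, the forward direction does not go through.
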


\begin{proof}
Let $S = (Q, T, q_0)$.

($\impliedby$)
Suppose splitting graph $Y = (V, E, W)$ for $S$ has been constructed using injective splits only.
Then, for each internal node $v$ of $Y$, $A_{W(v)}$ is a test case for $v$: inputs performed by the test case $A_{W(v)}$ will be enabled in all the
corresponding states of $S$.
This means that also the CCS term $F$ computed from $Y$ by \algorithmref{alg:distgraph} will correspond to a test case for the set $Q$ of states of $S$.
Since all the splits in $Y$ are injective, we have that for any pair $q, q'$ of incompatible states of $S$, and for any observation $\sigma$ of $A_F$ that is enabled in both $q$ and $q'$, the unique state in $\after{q}{\sigma}$ is incompatible with the unique state in
$\after{q'}{\sigma}$. But since, by construction, $\after{Q}{\sigma}$ only contains mutually compatible states, for each observation $\sigma$ of
$A_F$, we conclude that $A_F$ distinguishes $q$ and $q'$.  Therefore, $A_F$ is an adapaptive distinguishing graph for $S$.

 ($\implies$)
 Suppose $A = (Q', T', q'_0)$ is an adaptive distinguishing graph for $S$. 
 
 Let $Y$ be an incomplete splitting graph. We show that $Y$ has a leaf for which an injective split exists.
 
 Assume w.l.o.g.\ that $A$ is a tree (any DAG can be unfolded into a tree).
 We associate to each node $r$ of $A$ a \emph{height}, which is the length of the maximal path from $r$ to a leaf.
 Also, we associate to each node of $r$ a set of states from $S$ called the \emph{current set}: the current set of $q'_0$ is $Q$, and if the current set of state $r$ is $P$ and $T'(r, \mu) = r'$ then the current set of $r'$  is $\after{r}{\mu}$.
 Lemma~\ref{la: adg preserved by transitions} implies that if the current set of $r$ equals $P$, $A / r$ is an adaptive distinguishing graph
 for $P$.
 
 Now, amongst the leaves of $Y$ that contains a maximal number of states, choose a leaf $l$ that
 is contained in the current set $P$ of a node $r$ of $A$ with minimal height. We consider two cases:
 \begin{itemize}
 \item 
 $r$ is an input state of $A$. Then $r$ enables a single input action $a$.
 Let $T'(r,a) = r'$. Then the current set of $r'$ is $\after{P}{a}$ and the height of $r'$ is less than the height of $r$.
 By Lemma~\ref{la: adg preserved by transitions}, $a$ is injective for $P$.
 By definition of injectivity, $a$ is also injective for subset $l$ of $P$.
 Since all pairs of distinct states of $S$ are incompatible, the number of states in $\after{l}{a}$ equals the number of elements of $l$.
 Moreover, since $\after{l}{a}$ is contained in $\after{P}{a}$, and
 amongst the leaves of $Y$ that contains a maximal number of states $l$ is contained in the current set of a node with minimal height,
 $\after{l}{a}$ is not contained in any leaf of $Y$.
 Thus leaf $l$ is splittable on input $a$, and this split is injective.
 \item
 $r$ is an output state of $A$. Suppose $x \in \outm{l}$.
 Then either there is a $q \in l$ such that $x \not\in\outm{q}$, or 
 the number of states in $\after{l}{x}$ equals the number of elements of $l$ and $\after{l}{x}$ is not contained in any leaf of $Y$.
 This means that $l$ is splittable on output, with a split that is injective for each output $x$. \qed
 \end{itemize}

\end{proof}

\begin{example} \label{exmp:noadgcomp}
Without the assumption that there are no compatible state pairs, Theorem~\ref{thm:alldisting} does not hold.
The suspension automaton $S$ of Figure~\ref{fig:completeness fails with compatible state pairs} has an adaptive distinguishing graph, but
our algorithm does not find it.
Note that states $2$ and $3$ are compatible, and also states $6$ and $7$ are compatible.
An adaptive distinguishing graph for $S$ is denoted by CCS term
$x . a. b. (z . \Nil + t. \Nil) + y . a. b. (z . \Nil + t. \Nil) + z. \Nil + t. \Nil$.
When we construct a splitting graph for $S$, the set of all states $\{ 1, 2, 3, 4, 5, 6, 7, 8 \}$ will be split on output,
resulting in children $\{ 1 \}$, $\{2, 3, 4 \}$, $\{ 5 \}$ and $\{ 6, 7, 8 \}$.
Now a split of $\{ 2, 3, 4 \}$ on input $b$ is not injective and a split on input $a$ is not possible since the set of LCAs is empty.
Similarly, there is no injective split of $\{ 6, 7, 8 \}$.
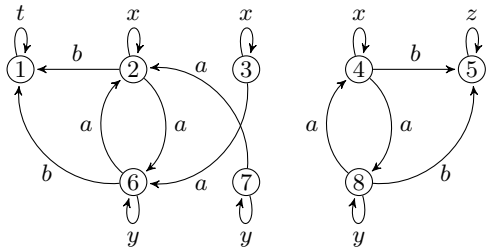
\begin{figure}
	\centering
	\begin{tikzpicture}[shorten >=1pt,node distance=1.5cm,>=stealth']
	\tikzstyle{every state}=[draw=black,text=black,inner sep=1pt,minimum
	size=10pt,initial text=]
	\node[state] (1) {1};
	\node[state] (2) [right of=1] {2};
	\node[state] (3) [right of=2] {3};
	\node[state] (4) [right of=3] {4};
	\node[state] (5) [right of=4] {5};
	\node[state] (6) [below of=2] {6};
	\node[state] (7) [below of=3] {7};
	\node[state] (8) [below of=4] {8};
	\path[->]
	(1) edge [loop above] node {$t$} (1)
	(2) edge [loop above] node {$x$} (2)
	(3) edge [loop above] node {$x$} (3)
	(4) edge [loop above] node {$x$} (4)
	(5) edge [loop above] node {$z$} (5)
	(6) edge [loop below] node {$y$} (6)
	(7) edge [loop below] node {$y$} (7)
	(8) edge [loop below] node {$y$} (8)
	(2) edge [bend left=50] node [right] {$a$} (6)
	(3) edge [bend left=50] node [below right=2mm and -4.5mm] {$a$} (6)
	(4) edge [bend left=50] node [right] {$a$} (8)
	(6) edge [bend left=50] node [left] {$a$} (2)
	(7) edge [bend right=50] node [above right=2mm and -4.5mm] {$a$} (2)
	(8) edge [bend left=50] node [left] {$a$} (4)
	(2) edge node [above] {$b$} (1)
	(6) edge [bend left=50] node [below] {$b$} (1)
	(4) edge node [above] {$b$} (5)
	(8) edge [bend right=50] node [below] {$b$} (5)
	;
	\end{tikzpicture}
	\vspace{-3mm}
	\caption{Theorem~\ref{thm:alldisting} fails in presence of compatible state pairs.}
	\label{fig:completeness fails with compatible state pairs}
	
\end{figure}
\end{example}

\section{Experimental Results on a Case Study}
\label{sec:experiments}



In \cite{learningESM}, an FSM model, with over 10.000 states,  was learned of an industrial piece of software, called the Engine Status Manager (ESM).
%
During the learning process,
testing against the ESM posed a significant challenge: it turned out to be extremely difficult to find counterexamples for hypothesis models.
Initially, existing conformance testing algorithms were used to find counterexamples
for hypothesis models (random walk, W-method, Wp-method, etc), but for larger
hypothesis models these methods were unsuccessful. 
However, adaptive distinguishing sequences as in \cite{leeyannakakis}, augmented with  additional pairwise distinguishing sequences for states not distinguished by the adaptive sequence, were able to find the required counterexamples.
Therefore, the ESM models are good candidates to show the strength of the adaptive distinguishing graphs of this paper too.

Of course, applying our adaptive distinguishing graphs directly on the Mealy machine models, would not show our capability to handle the more expressive suspension automata.
We therefore transformed the FSM models in such a way that they exhibit output nondeterminism.
We first split all Mealy $i/o$ transitions in two consecutive transitions $i$ and $o$, and added a self-loop output transition `quiescence' (denoting absense of response) to all states only having input transitions, to make it non-blocking.
To  ensure determinism, information about data parameters from the ESM was added to the labels of the Mealy machine in \cite{learningESM}. For our experiments, we removed this information again, resulting in suspension automata with states with  multiple outgoing output transitions.

For performance reasons, we reduced the Mealy machine model with a subalphabet, before applying the transformation steps described above, i.e., we removed all $i/o$ transitions with $i$ not in the subalphabet.
We obtained these subalphabets from \cite{smeenkthesis}, which contains a figure displaying interesting subalphabets based on domain knowledge.
\tableref{tab:graphstats} shows that the resulting suspension automata still have a significant size.
%

We applied the algorithms of this paper to obtain a splitting graph and an adaptive distinguishing graph.
The splitting graph was constructed as in \algorithmref{splitalg}, so without requiring injectivity of the used labels. However,
in the construction of the adaptive distinguishing graph (\algorithmref{alg:distgraph}) we chose on line 8 an LCA which was injective for the most pairs of states.

\tableref{tab:graphstats} shows that there are many pairs of incompatible states to distinguish. However, the number of nodes of the splitting graph are in the order of magnitude of the number of states of the suspension automaton, and the longest observable trace (i.e., the depth) of the adaptive distinguishing graphs is not long at all.
Moreover, over 99\% of the pairs of incompatible states are distinguished by the adaptive distinguishing graph.
This indicates that the adaptive distinguishing graphs, although constructed from a non-injective splitting graph, can be very effective in testing.

\begin{table}
 \begin{tabular}{r|c|c|c|c|c|}
  \begin{minipage}{2.5cm}\centering Subalphabet \end{minipage}
  & \begin{minipage}{1.2cm}\centering Number of states \end{minipage}                                      
  & \begin{minipage}{2cm}\centering Pairs of compatible states \end{minipage} 
  & \begin{minipage}{1.25cm}\centering Nodes in splitting graph \end{minipage} 
  & \begin{minipage}{1.6cm}\centering Depth distinguishing graph \end{minipage} 
  & \begin{minipage}{2cm}\centering Incompatible pairs not distinguished \end{minipage} \\\hline
  {\smallerfont InitIdleSleep} & 1616 & 16638 (0.64\%) & 1121 & 33 & 1145 (0.044\%) \\\hline
  {\smallerfont InitIdleStandbyRunning} & 2855 & 14171 (0.17\%) & 2082 & 33 & 2183 (0.027\%)\\\hline
  {\smallerfont InitIdleStandbySleep} & 3168 & 25974 (0.26\%) & 2226 & 33 & 3826 (0.038\%)\\\hline
  {\smallerfont InitIdleStandbyLowPower} & 2614 & 13834 (0.20\%) & 1809 & 33 & 2920 (0.043\%)\\\hline
   {\smallerfont  InitError} & 2649 & 373427 (5.3\%) & 3097 & 35 & 17972 (0.27\%) \\\hline
 \end{tabular}\vspace{2mm}
 \caption{Computation statistics}
 \label{tab:graphstats}
\end{table}

\iflong
To further explore the structure of the adaptive distinguishing graph, we computed the \emph{size} of each leaf: the number of automaton states, that enable the observable trace to that leaf.
We note that this includes states compatible to some of the automaton states. Additionally, states may enable multiple observable traces, and hence a single state may increase the size of several leaves.
\figureref{fig:colgraphs} shows the results: the x-axis displays all leaf sizes, and a column of some subalphabet shows the number of leaves of this size (y-axis).
We see that the majority of leaves are of small size, while leaves of larger size occur less. We see that subalphabet InitError has the most large leaves, which could explain the adaptive distinguishing graph's relatively large number of pairs of incompatible states not distinguished.

\begin{figure}[ht!]
\includegraphics[scale=0.63]{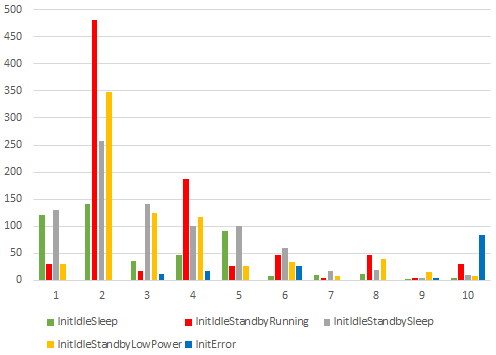}\\
\includegraphics[scale=0.63]{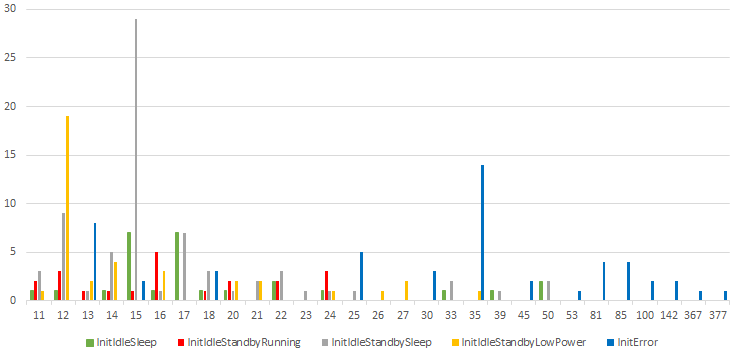}
\caption{Leaf sizes}
\label{fig:colgraphs}
\end{figure}
\fi

\section{Conclusions and Future Work}

We studied the state identification problem for suspension automata, generalizing results from \cite{leeyannakakis}. We presented algorithms to construct test cases that distinguish all incompatible state pairs, if possible, or many, if not. Experiments suggest that this approach is quite effective.

We see several directions for future research.
First, though we did apply our algorithms to instances of an industrial benchmark, we would like to apply it to different case studies as well, to further explore the applicability of our approach. We note however that there are not that many (large) LTS benchmarks available.

An open problem is to give a bound on the depth of the distinguishing graph that our algorithms constructs. 
For FSMs, a quadratic bound is known \cite{leeyannakakis}, with examples to show it is tight \cite{Sokolovskii1971,leeyannakakis}.
These examples extend to our setting, as we generalize from the FSM setting, but the proof for the quadratic bound on adaptive distinguishing sequences from \cite{leeyannakakis} does not.


If our algorithm returns an adaptive distinguishing graph that does not distinguish all incompatible state pairs, the question remains how to efficiently distinguish these remaining states.
Graphs distinguishing pairs of states can be obtained directly from our splitting graph, or by computing them as in \cite{ncompletejournal}, but distinguishing all remaining pairs results in a large overhead compared to the small size of the distinguishing graph we obtained in our experiments.
On the one hand, we can optimize the obtained distinguishing graph by improving the splitting graph's quality by applying heuristics that optimize the choice of labels for splitting leaves.
On the other hand, we can use causes for states not being distinguished to construct a distinguishing graph that distinguishes all or at least many of the not distinguished states.

Though our distinguishing graphs significantly improve the size of an $n$-complete test suite, the problem to compute good access sequences for such a test suite requires further research as well \cite{ncompletejournal}. Due to the output nondeterminism of suspension automata, we need an input-fairness assumption, to ensure that all outputs enabled from a state may eventually be observed. However, for access sequences we rather have a more adaptive strategy, in the spirit of \cite{mbtgames}, that reacts on the outputs as produced by the tested system rightaway.
%
%
Adaptively choosing access sequences means that for reaching the same state, different access sequences may be used. However, the proof of $n$-completeness of a test suite depends on using one unique access sequence for accessing the same state. It remains an open problem whether using different access sequences breaks $n$-completeness or not.

 \bibliographystyle{plain}
 \bibliography{lib}

\begin{thebibliography}{10}

\bibitem{distnondetprob}
Rajeev Alur, Costas Courcoubetis, and Mihalis Yannakakis.
\newblock Distinguishing tests for nondeterministic and probabilistic machines.
\newblock In {\em STOC}, volume~95, pages 363--372. Citeseer, 1995.

\bibitem{BK08}
Christel Baier and Joost-Pieter Katoen.
\newblock {\em Principles of Model Checking}.
\newblock The MIT Press, 2008.

\bibitem{mergesas}
Nikola Bene\v{s}, Przemys{\l}aw Daca, Thomas~A. Henzinger, Jan
  K{\v{r}}et{\'\i}nsk{\`y}, and Dejan Ni{\v{c}}kovi{\'c}.
\newblock {C}omplete {C}omposition {O}perators for {I}{O}{C}{O}-{T}esting
  {T}heory.
\newblock In {\em Proceedings of the 18th International ACM SIGSOFT Symposium
  on Component-Based Software Engineering}, CBSE '15, pages 101--110, New York,
  NY, USA, 2015. ACM.

\bibitem{ncompletejournal}
Petra {\VAN{Bos}{van den}{Van den}}~Bos, Ramon Janssen, and Joshua Moerman.
\newblock n-{C}omplete {T}est {S}uites for {I}{O}{C}{O}.
\newblock {\em Software Quality Journal}, 27(2):563--588, Jun 2019.

\bibitem{mbtgames}
Petra {\VAN{Bos}{van den}{Van den}}~Bos and Marielle Stoelinga.
\newblock {T}ester versus {B}ug: {A} {G}eneric {F}ramework for {M}odel-{B}ased
  {T}esting via {G}ames.
\newblock In Andrea Orlandini and Martin Zimmermann, editors, {\em {\rm
  Proceedings Ninth International Symposium on} Games, Automata, Logics, and
  Formal Verification, {\rm Saarbr\"ucken, Germany, 26-28th September 2018}},
  volume 277 of {\em Electronic Proceedings in Theoretical Computer Science},
  pages 118--132. Open Publishing Association, 2018.

\bibitem{CLRS3rd}
Thomas~H. Cormen, Charles~E. Leiserson, Ronald~L. Rivest, and Clifford Stein.
\newblock {\em Introduction to Algorithms, Third Edition}.
\newblock The MIT Press, 3rd edition, 2009.

\bibitem{guardedcommands}
Edsger~W. Dijkstra.
\newblock Guarded commands, nondeterminacy, and formal derivation of programs.
\newblock In David Gries, editor, {\em Programming Methodology: A Collection of
  Articles by Members of IFIP WG2.3}, pages 166--175, New York, NY, 1978.
  Springer New York.

\bibitem{fsmtestingsurvey}
Rita Dorofeeva, Khaled El-Fakih, Stephane Maag, Ana~R. Cavalli, and Nina
  Yevtushenko.
\newblock {FSM}-based conformance testing methods: A survey annotated with
  experimental evaluation.
\newblock {\em Information and Software Technology}, 52(12):1286--1297, 2010.

\bibitem{Gill62}
Arthur Gill.
\newblock {\em Introduction to the Theory of Finite-state Machines}.
\newblock McGraw-Hill, New York, 1962.

\bibitem{statecounting}
R.~M. {Hierons}.
\newblock Testing from a {N}ondeterministic {F}inite {S}tate {M}achine using
  {A}daptive {S}tate {C}ounting.
\newblock {\em IEEE Transactions on Computers}, 53(10):1330--1342, Oct 2004.

\bibitem{leeyannakakis}
David Lee and Mihalis Yannakakis.
\newblock Testing finite-state machines: State identification and verification.
\newblock {\em IEEE Transactions on Computers}, 43(3):306--320, March 1994.

\bibitem{LeeY96}
David Lee and Mihalis Yannakakis.
\newblock Principles and methods of testing finite state machines --- a survey.
\newblock {\em Proceedings of the IEEE}, 84(8):1090--1123, 1996.

\bibitem{Mazala2002}
Ren{\'e} Mazala.
\newblock {I}nfinite {G}ames.
\newblock In Erich Gr{\"a}del, Wolfgang Thomas, and Thomas Wilke, editors, {\em
  Automata Logics, and Infinite Games: A Guide to Current Research}, pages
  23--38. Springer Berlin Heidelberg, Berlin, Heidelberg, 2002.

\bibitem{milner}
Robin Milner.
\newblock {\em Communication and concurrency}.
\newblock Prentice-Hall, Inc., 1989.

\bibitem{thesisjoshua}
Joshua Moerman.
\newblock {\em Nominal Techniques and Black Box Testing for Automata Learning}.
\newblock PhD thesis, Radboud University Nijmegen, July 2019.

\bibitem{moore1956}
Edward~F Moore.
\newblock Gedanken-experiments on sequential machines.
\newblock In {\em Annals of Mathematics Studies}, volume~34, pages 129--153,
  Princeton, NJ, 1956. Princeton University Press.

\bibitem{nfsmreduction}
Alexandre Petrenko and Nina Yevtushenko.
\newblock Conformance tests as checking experiments for partial
  nondeterministic {FSM}.
\newblock In Wolfgang Grieskamp and Carsten Weise, editors, {\em Formal
  Approaches to Software Testing}, pages 118--133, Berlin, Heidelberg, 2006.
  Springer Berlin Heidelberg.

\bibitem{detimpnondetspec}
Alexandre Petrenko and Nina Yevtushenko.
\newblock {A}daptive {T}esting of {D}eterministic {I}mplementations {S}pecified
  by {N}ondeterministic {F}{S}{M}s.
\newblock In Burkhart Wolff and Fatiha Za{\"i}di, editors, {\em Testing
  Software and Systems}, pages 162--178, Berlin, Heidelberg, 2011. Springer
  Berlin Heidelberg.

\bibitem{detstateidfc}
{Saddek Bensalem}, {Moez Krichen}, and {Stavros Tripakis}.
\newblock State identification problems for input/output transition systems.
\newblock In {\em 2008 9th International Workshop on Discrete Event Systems},
  pages 225--230, May 2008.

\bibitem{iotsioco}
Adenilso Sim{\~{a}}o and Alexandre Petrenko.
\newblock Generating {C}omplete and {F}inite {T}est {S}uite for ioco: {I}s {I}t
  {P}ossible?
\newblock In {\em Proceedings Ninth Workshop on Model-Based Testing, {MBT}
  2014, Grenoble, France, 6 April 2014.}, pages 56--70, 2014.

\bibitem{smeenkthesis}
Wouter Smeenk.
\newblock Applying automata learning to complex industrial software.
\newblock {\em Master's thesis, Radboud University Nijmegen}, 2012.

\bibitem{learningESM}
Wouter Smeenk, Joshua Moerman, Frits Vaandrager, and David~N. Jansen.
\newblock {A}pplying {A}utomata {L}earning to {E}mbedded {C}ontrol {S}oftware.
\newblock In Michael Butler, Sylvain Conchon, and Fatiha Za{\"i}di, editors,
  {\em Formal Methods and Software Engineering}, pages 67--83, Cham, 2015.
  Springer International Publishing.

\bibitem{Sokolovskii1971}
M.~N. Sokolovskii.
\newblock Diagnostic experiments with automata.
\newblock {\em Cybernetics}, 7(6):988--994, Nov 1971.

\bibitem{tretmans}
Jan Tretmans.
\newblock {Model Based Testing with Labelled Transition Systems}.
\newblock In R.M. Hierons, J.P. Bowen, and M.~Harman, editors, {\em Formal
  Methods and Testing}, volume 4949 of {\em Lecture Notes in Computer Science},
  pages 1--38. Springer-Verlag, 2008.

\bibitem{willemse}
Tim A.~C. Willemse.
\newblock {H}euristics for ioco-{B}ased {T}est-{B}ased {M}odelling.
\newblock In Lubo{\v{s}} Brim, Boudewijn Haverkort, Martin Leucker, and Jaco
  van~de Pol, editors, {\em Formal Methods: Applications and Technology}, pages
  132--147, Berlin, Heidelberg, 2007. Springer Berlin Heidelberg.

\bibitem{LYfaultdetection}
Mihalis Yannakakis and David Lee.
\newblock Testing finite state machines: Fault detection.
\newblock {\em Journal of Computer and System Sciences}, 50(2):209 -- 227,
  1995.

\end{thebibliography}
 
\end{document}